\documentclass[11pt]{article}       
\usepackage{geometry}               
\usepackage{amsmath} % assumes amsmath package installed
\usepackage{amssymb}  % assumes amsmath package installed
\usepackage{amsthm}  % assumes amsmath package installed

\usepackage{mathtools}
\usepackage{amsthm}
\usepackage{amsfonts}
\usepackage{bbm}
\usepackage{hyperref}
\usepackage{cleveref}
\usepackage{algorithm}
\usepackage{algpseudocode}
\usepackage{cite}
\usepackage[dvipsnames]{xcolor}
\usepackage{bm}

\usepackage{booktabs}
\usepackage{multirow}
\usepackage{dsfont}
\usepackage{xspace}

\newtheorem{theorem}{Theorem}
\newtheorem{lemma}{Lemma}

\newtheorem{assumption}{Assumption}
\newtheorem{proposition}{Proposition}
\newtheorem{remark}{Remark}
\newtheorem*{problem}{Problem Statement}

\Crefname{assumption}{Assumption}{Assumptions}
\Crefname{theorem}{Theorem}{Theorems}
\Crefname{lemma}{Lemma}{Lemmas}
\Crefname{figure}{Fig.}{Figs.}

\newcommand{\ellone}{$\mathcal{L}_1$\xspace}
\newcommand{\ellonedrac}{$\mathcal{L}_1$-DRAC\xspace}
\newcommand{\xdisc}{\bar{x}}
\newcommand{\udisc}{\bar{u}}
\newcommand{\wdisc}{\bar{w}}
\newcommand{\ydisc}{\bar{y}}
\newcommand{\zdisc}{\bar{z}}
\newcommand{\xdisctrue}{x}

\newcommand{\xdisclaw}[2]{\bar{\mathbb{X}}^d_{#1|#2}}
\newcommand{\zdisclaw}[2]{\bar{\mathbb{Z}}^d_{#1|#2}}
\newcommand{\ydisclaw}[2]{\bar{\mathbb{Y}}^d_{#1|#2}}

\newcommand{\Wass}{\mathcal{W}}
\newcommand{\norm}[1]{\left\| #1 \right\|}

\newcommand{\frameworkname}{DRP-$\mathcal{L}_1$AC}

\newcommand{\Boldomega}{\omega}
\newcommand{\Lhat}[1]{\hat{\Lambda}\left(#1\right)}
\newcommand{\BoldTs}{T_s}
\newcommand{\br}[1]{\left( #1 \right)}
\newcommand{\cbr}[1]{\left\{ #1 \right\}}
\newcommand{\expo}[1]{e^{#1}}
\newcommand{\indicator}[2]{\mathbf{1}_{#1}\left(#2\right)} 
\newcommand{\ULt}[1]{U_{\mathcal{L}_1,#1}}
\newcommand{\Xtildet}[1]{\tilde{X}_{#1}}
\newcommand{\Xhatt}[1]{\hat{X}_{#1}}
\newcommand{\Xt}[1]{X_{#1}}

\title{Distributionally Robust Planning with $\mathcal{L}_1$ Adaptive Control\thanks{
This work was supported in part by the Higher Education and Science Committee of RA (Research project 24FP-C017), Air Force Office of Scientific Research (AFOSR) Grant FA9550-25-1-0274, and by the National Science Foundation (NSF) under Grants 2135925, 2331878, 2311085, 2502857, and 2515359.
}
} % 
\author{Astghik Hakobyan, Amaras Nazarians, Aditya Gahlawat,  \\ Naira Hovakimyan, and Ilya Kolmanovsky\thanks{A. Hakobyan is with the Center for Scientific Innovation and Education and National Polytechnic University of Armenia, Yerevan, Armenia
        {\tt\small astghik.hakobyan@csie.am}. 
A. Gahlawat and N. Hovakimyan are with the Department of Mechanical Science and Engineering, Grainger College of Engineering, University of Illinois at Urbana-Champaign, Urbana, IL, USA
        {\tt\small \{gahlawat, nhovakim\}@illinois.edu}. 
A. Nazarians is with the Center for Scientific Innovation and Education and American University of Armenia, Yerevan, Armenia
        {\tt\small amaras\_nazarians@edu.aua.am}. 
I. Kolmanovsky is with the Department of Aerospace Engineering, University of Michigan, Ann Arbor, MI, USA.
{\tt\small ilya@umich.edu}.
}
}

\date{}

\begin{document}

\maketitle

\begin{abstract}
Safe operation of autonomous systems requires robustness to both model uncertainty and uncertainty in the environment. We propose \frameworkname,  a hierarchical framework for stochastic nonlinear systems that integrates distributionally robust model predictive control (DR-MPC) with \ellone-adaptive control. The key idea is to use the \ellone-adaptive controller's online distributional certificates that bound the Wasserstein distance between nominal and true state distributions, thereby certifying the ambiguity sets used for planning without requiring distribution samples. Environmental uncertainty is captured via data-driven ambiguity sets constructed from finite samples. These are incorporated into a DR-MPC planner enforcing distributionally robust chance constraints over a receding horizon. Using Wasserstein duality, the resulting problem admits tractable reformulations and a sample-based implementation. 
We show theoretically and via numerical experimentation that our framework ensures certifiable safety in the presence of \emph{simultaneous system and environmental uncertainties.}
\end{abstract}

\section{Introduction}
Certifiably safe operation of autonomous systems under uncertainty remains a fundamental challenge in control, particularly when both system dynamics and the environment are uncertain.
Real-world systems are subject to both \emph{epistemic} uncertainty from imperfect system and environmental knowledge and \emph{aleatoric} uncertainty due to exogenous stochastic disturbances. 
Consequently, system behavior can deviate significantly from nominal predictions, making safety guarantees difficult, if not infeasible.

Existing approaches face a fundamental trade-off between robustness and performance.
Classical robust and adaptive control methods~\cite{zhou1998essentials,herbert2017fastrack,singh2019robust} provide stability under bounded uncertainty but rely on worst-case analysis, often leading to conservative behavior. 
In contrast, stochastic approaches enforce safety in an average-case or high-probability sense by reasoning over distributions rather than uncertainty sets.
Such representations allow control synthesis at the level of state distributions and are amenable to data-driven approaches~\cite{achiam2017constrained,liu2022constrained}.
However, despite strong empirical performance, they lack robustness guarantees and are sensitive to distribution shifts, limiting their use in safety-critical systems.

%%%%%%%%%%%%%%%%%%%%%%%%%%%%%%%%%%%%%%%%%%%%%%%%%%%%%%%
\begin{figure}
    \centering
    \includegraphics[width=0.8\linewidth]{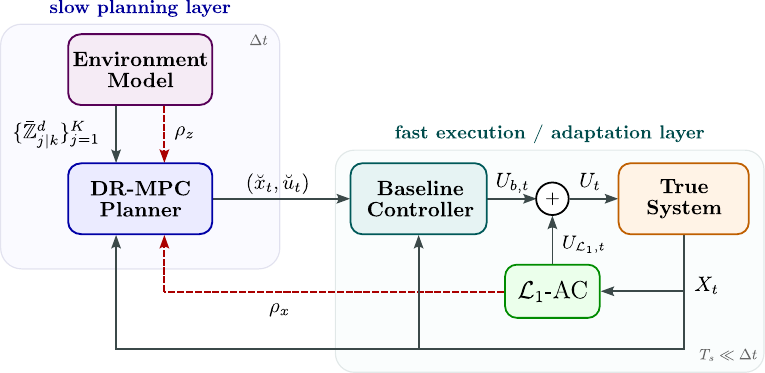}
    \caption{Overview of the proposed \frameworkname \ framework. A DR-MPC planner generates reference trajectories satisfying safety constraints, which are tracked by a baseline controller augmented with an \ellone-AC term. The \ellone-AC certifies that the true state distribution remains within a prescribed Wasserstein tube around the nominal distribution.
    }
    \label{fig:overview}
\end{figure}
%%%%%%%%%%%%%%%%%%%%%%%%%%%%%%%%%%%%%%%%%%%%%%%%%%%%%%

The distributional perspective, which operates directly over probability measures, provides a promising alternative. Ambiguity sets that capture the effects of uncertainties enable safety through chance or conditional value-at-risk (CVaR) constraints~\cite{zymler2013distributionally, xie2021distributionally, ji2021data}. 
This perspective has led to distributionally robust control methods~\cite{kim2023distributional, taskesen2023distributionally, queeney2023risk}, with distributionally robust model predictive control (DR-MPC)~\cite{coulson2021distributionally, mcallister2024distributionally, hakobyan2021wasserstein, zolanvari2025iterative} providing a natural receding-horizon formulation. 

However, a key limitation of DR-MPC is the assumption of the \emph{existence and knowledge of uncertainty-induced ambiguity sets}.
While environmental uncertainty can often be characterized from multiple data samples (e.g., via onboard sensors), sampling from the system's state distributions is infeasible.
At each time step, the controller has access to the \textbf{feedback signal, which is a single realization of a time-varying distribution}.
As a result, there is no mechanism to \emph{certify that the true (uncertain) closed-loop system remains within the ambiguity set used for planning}, and safety guarantees may fail under uncertain system dynamics. 

To address this limitation, we leverage \emph{\ellone-adaptive control} (\ellone-AC), a robust adaptive controller that guarantees uniformly bounded state deviations under epistemic uncertainty and has been validated on systems such as NASA’s AirStar and Calspan’s Learjet~\cite{gregory2010flight,ackerman2017evaluation}.
Its robustness properties have been exploited in data-driven settings, including reinforcement learning~\cite{sung2024robust}, safe learning-based control~\cite{gahlawat2020l1,pmlr-v144-gahlawat21a}, and model predictive control (MPC) for trajectory-level robustness~\cite{pereida2021robust,pravitra2020L1}. 
Importantly, the recently developed \ellone-distributionally robust adaptive control (\ellonedrac)~\cite{gahlawat2025DRAC} establishes finite-time Wasserstein bounds between nominal and true state distributions, defining Wasserstein ambiguity tubes used in covariance steering~\cite{gahlawat2025wasserstein}.
These methods, including DR-MPC, either perform \emph{planning without certifiable guarantees}, or \emph{fail to ensure safety under simultaneous model and environmental uncertainty}. 

We propose \textbf{\frameworkname}, a novel framework, illustrated in Fig.~\ref{fig:overview}, that \emph{closes the loop between the high-level planning and low-level controller through distributional certificates}. Our approach enables \textbf{safe operation under both model and environmental uncertainties} by certifying safety over joint system--environment distributions.
We integrate a DR-MPC planner with an \ellone-adaptive controller in a bidirectional architecture: the planner generates reference trajectories, while the low-level \ellone-adaptive controller tracks the reference and communicates \emph{sample-free Wasserstein certificates} that bound the deviation between nominal and true state distributions, thereby certifying the system-side ambiguity set.
Environmental uncertainty is captured via data-driven ambiguity sets, and the resulting DR-MPC enforces distributionally robust chance constraints (DR-CCs).
We show that feasibility of the DR-MPC problem, together with the \ellonedrac certificate, \emph{guarantees stagewise safety of the true closed-loop system}.

The main contributions of this paper are as follows:
\begin{itemize}
    \item A hierarchical DR-MPC framework that couples planning and adaptation via a joint Wasserstein ambiguity set, where the \emph{a priori} known system ambiguity radius is certified online using \ellone-AC without requiring distribution samples, and the environmental radius is obtained via finite-sample concentration bounds;
    \item A closed-loop distributional certification mechanism that enables \emph{provable stagewise safety guarantees} under simultaneous model and environmental uncertainty; and
    \item A tractable and geometry-agnostic reformulation of DR-CCs using CVaR and Wasserstein duality, enabling a sample-based DR-MPC implementation.
\end{itemize}
We validate the proposed framework through numerical experiments demonstrating certified safe operation under both model mismatch and environmental uncertainty.

The paper is organized as follows. \Cref{sec:prelim} introduces the system and the environment, as well as the problem statement. \Cref{sec:dr_l1} presents the proposed framework. \Cref{sec:tract_mpc} develops a tractable DR-MPC reformulation, and \Cref{sec:exp} provides numerical results.

\medskip

\noindent \textbf{\emph{Notation.}} We denote the set of nonnegative reals and positive integers by $\mathbb{R}_{\ge 0}$ and
$\mathbb{N}$.
For any $x \in \mathbb{R}^n$, $\|x\|$ denotes the Euclidean norm.
We denote by $\mathcal{P}_q(\mathcal{Y})$ the set of probability measures
on Polish space $\mathcal{Y}$ with finite $q$-th moments.
The law of a random variable $X$ is denoted by $\mathfrak{L}(X)$, and for probability measures $\mu$ and $\nu$, $\mu \otimes \nu$ denotes their product measure. 
All stochastic processes in the manuscript are defined on a filtered probability space $(\Omega,\mathcal F,\{\mathcal F_t\}_{t\ge0},\mathbb P)$ satisfying the usual conditions.
For two probability measures $\mu,\nu \in \mathcal P_q (\mathcal Y)$, defined on a Polish space $(\mathcal Y,d_{\mathcal Y})$, the Wasserstein distance of order $q\ge1$ is defined as 
$
\Wass_q(\mu,\nu)
=
\bigl(
\inf_{\pi\in\Pi(\mu,\nu)}
\int_{\mathcal Y\times\mathcal Y}
d_{\mathcal Y}(y,\bar y)^q
\,d\pi(y,\bar y)
\bigr)^{1/q}$,
where $\Pi(\mu,\nu)$ denotes the set of couplings with marginals $\mu$ and $\nu$.
The corresponding Wasserstein ambiguity set centered at $\nu \in \mathcal{P}_q(\mathcal{Y})$ is defined as
\begin{equation}\label{eqn:amb_set}
    \mathbb{B}_q(\nu,\zeta)
    :=
    \left\{
    \mu \in \mathcal{P}_q(\mathcal{Y})
    \;\middle|\;
    \Wass_q(\mu,\nu) \le \zeta
    \right\},
\end{equation}
where $\zeta \in \mathbb{R}_{>0}$ denotes the ambiguity radius.

\section{Preliminaries and Problem Statement}\label{sec:prelim}

We now describe the uncertain system we wish to control and the uncertain environment in which the system operates.  Then, in~\Cref{subsec:probstat} we introduce the problem we address. 

\subsection{Uncertain System and Environmental Dynamics}

We consider an \textbf{uncertain process} $X_t \in \mathbb{R}^n$ evolving according to the following nonlinear It\^o stochastic differential equation:
\begin{equation}\label{eqn:true_sys}
    dX_t
    =
    \left(A_\mu X_t + B U_t + B \Lambda_\mu(t, X_t)\right)\,dt 
    +
    \left(A_\sigma+B \Lambda_\sigma(t,X_t)\right)\,dW_t,
    \quad
    X_0 \sim \xi_0,
\end{equation}
where $U_t \in \mathbb{R}^m$ is an input process to be specified later, and $A_\mu\in\mathbb R^{n\times n}$, $A_\sigma\in\mathbb R^{n\times d}$, and $B\in\mathbb R^{n\times m}$ are \textbf{known}.
The input operator $B$ is assumed to be full rank.
The \textbf{unknown} functions $\Lambda_\mu:\mathbb R_{\ge0}\times\mathbb R^n\to\mathbb R^m$ and $\Lambda_\sigma:\mathbb R_{\ge0}\times\mathbb R^n\to\mathbb R^{m\times d}$ represent the drift and diffusion uncertainties, respectively.  
Moreover, $W_t \in \mathbb{R}^d$ is an $\mathcal{F}_t$-adapted Brownian motion and $\xi_0$ is the initial distribution independent of $\cup_{t\geq0}W_t$. 

Dropping the unknown functions from~\eqref{eqn:true_sys} produces the nominal (uncertainty-free) dynamics
\begin{equation}\label{eqn:nom_sys}
d\bar{X}_t
=
(A_\mu \bar{X}_t + B \bar{U}_t)\,dt
+
A_\sigma\,d\bar{W}_t,
\;
\bar{X}_0 \sim \bar{\xi}_0,
\end{equation}
where $\bar{X}_t \in \mathbb{R}^n$ is the \textbf{nominal state process} and $\bar W_t\in\mathbb R^d$ is another $\mathcal{F}_t$-adapted Brownian motion that is independent of $W_t$ driving~\eqref{eqn:true_sys}.
The initial distribution $\bar\xi_0$ is independent of $\cup_{t\geq0} \bar{W}_t$. 
We denote the \textbf{uncertain and nominal (known) probability laws} by $\mathbb X_t := \mathfrak{L}(X_t)$ and $\bar{\mathbb X}_t := \mathfrak{L}(\bar X_t)$, respectively.
%%%%%%%%%%%%%%%%%%%%%%%%%%%%%%%%%%%%%%%%%%%%%%%
\begin{assumption}[Uncertainty regularity]\label{ass:true_sys}
The uncertainties $\Lambda_\mu(t,\xi)$ and $\Lambda_\sigma(t,\xi)$ are globally Lipschitz over $(t,\xi) \in \mathbb{R}_{\ge0} \times \mathbb{R}^n$. 
The diffusion uncertainty is additionally growth bounded as $\norm{\Lambda_\sigma(t,\xi)}^2 \leq C \left(1 + \norm{\xi}^2\right)^{\frac{1}{2}}$, $\forall (t,\xi) \in \mathbb{R}_{\ge0} \times \mathbb{R}^n$, where $C$ is some positive scalar.
\end{assumption}
% \begin{assumption}[Uncertainty regularity]\label{ass:true_sys}
% The functions $\Lambda_\mu$ and $\Lambda_\sigma$ satisfy the growth bounds
% $\|\Lambda_\mu(t,x)\|^2 \le \Delta_{\mu_1}^2+\Delta_{\mu_2}^2\|x\|^2$ and
% $\|\Lambda_\sigma(t,x)\|_F^2 \le (\Delta_{\sigma_1}^2+\Delta_{\sigma_2}^2\|x\|^2)^{1/2}$ for all $(t,x)\in\mathbb R_{\ge0}\times\mathbb R^n$, where $\Delta_{\mu_1}, \Delta_{\mu_2}, \Delta_{\sigma_1},\Delta_{\sigma_2} > 0$ are known constants.
% Moreover, $\Lambda_\mu$ is globally Lipschitz in $(t,x)$, i.e.,
% $\|\Lambda_\mu(t,x)-\Lambda_\mu(t',x')\|\le\hat L_\mu|t-t'|+L_\mu\|x-x'\|$,
% while $\Lambda_\sigma$ is Lipschitz in time and H\"older continuous with exponent $1/2$ in the state variable, i.e.,
% $\|\Lambda_\sigma(t,x)-\Lambda_\sigma(t',x')\|_F\le\hat L_\sigma|t-t'|+L_\sigma\|x-x'\|^{1/2}$.
% \end{assumption}
%%%%%%%%%%%%%%%%%%%%%%%%%%%%%%%%%%%%%%%%%%%%%%%%%%
%%%%%%%%%%%%%%%%%%%%%%%%%%%%%%%%%%%%%%%%%%%%%%%%%%
\begin{remark}
    The global Lipschitz continuity of $\Lambda_\mu$ in \Cref{ass:true_sys} further implies its linear growth over $\mathbb{R}^n$.
    We place a stricter sublinear growth condition on $\Lambda_\sigma$, which, for a diffusion uncertainty, is more general compared to existing results in the literature where drift uncertainties are assumed to be uniformly bounded.    
\end{remark}
%%%%%%%%%%%%%%%%%%%%%%%%%%%%%%%%%%%%%%%%%%%%%%%%%%
 
Under \Cref{ass:true_sys} and admissible inputs $U_t$, \eqref{eqn:true_sys} admits a unique strong solution on $[0,T]$, $T<\infty$~\cite[Thm.~2.3.1]{mao2007stochastic}. 
The nominal system~\eqref{eqn:nom_sys} is trivially well-posed, since $\bar{U}_t$ will be designed as a linear feedback input.

We model the \textbf{uncertain environment} by a random and temporally evolving process $Z_t \in \mathbb{R}^{n_z}$, which represents the environment state.
We define unsafe regions (occupied by obstacles) by the set-valued map $\mathcal{O} : \mathbb{R}^{n_z} \rightrightarrows \mathbb{R}^n$, where $\mathcal{O}(z) \subset \mathbb{R}^n$ is closed, for all $z \in \mathbb{R}^{n_z}$.
Thus, at any $t \in \mathbb{R}_{\geq 0}$, the \emph{random} and \emph{uncertain} unsafe region is given by $\mathcal{O}(Z_t)$.   
We assume the obstacles within the environment do not react to the system, i.e., the process $\{Z_t\}$ is independent of $\{X_t\}$. Since the environment is uncertain, its law $\mathbb{Z}_t := \mathfrak{L}(Z_t)$ is unknown. However, real-time measurements from on-board sensors provide finite samples from $\mathbb{Z}_t$, which we use to construct a nominal environmental distribution $\bar{\mathbb{Z}}_t$.%
\footnote{We provide the details on the construction of $\bar{\mathbb{Z}}_t$ later in the manuscript.}
% We consider the case where the obstacles within the environment do not react to the system, i.e. the environment process $\{Z_t\}$ is independent of the system $\{X_t\}$. However, since a system can obtain real-time samples of the environment it is operating in using on-board sensors, we assume access to a finite number of samples from $\mathbb{Z}_t$ which allows us to construct a nominal environment distribution $\bar{\mathbb{Z}}_t$.

% is assumed to be exogenous and independent of the
% system state process $\{X_t\}$. Its probability law $\mathbb{Z}_t := \mathfrak{L}(Z_t)$ is unknown in practice, and we \annotate{instead assume access to a nominal distribution $\bar{\mathbb{Z}}_t$, obtained from historical observations or samples generated by an environment prediction model}{Not real-time?}.

\subsection{Problem Statement}\label{subsec:probstat}

To ensure the safe operation of the uncertain process $X_t$ evolving within an uncertain environment $Z_t$, we introduce \emph{lifted (concatenated) representation}  $Y_t := (X_t,Z_t) \in \mathcal{Y} := \mathbb{R}^n \times \mathbb{R}^{n_z}$.
Then, we say that the \textbf{system operates safely} with probability $\mathbf{1 - }\boldsymbol{\beta}$, $\boldsymbol{\beta}\, \mathbf{\in (0,1)}$, if
    \begin{equation}
        \mathbb{P}\left(Y_t \in \mathcal{C}\right) \leq \beta, \quad \forall t \geq 0,
        \label{eqn:chance_const_lifted}
    \end{equation}    
where $\mathcal{C} :=  \left\{ (x,z) \in \mathbb{R}^n \times \mathbb{R}^{n_z} : x \in \mathcal{O}(z) \right\}$ is the \emph{lifted unsafe set}, capturing all joint configurations $(x,z)$ for which the system state $x$ lies within the unsafe region $\mathcal{O}(z)$ induced by the environment state $z$.

The \emph{uncertain lifted law} $\mathbb Y_t = \mathbb X_t \otimes \mathbb Z_t$ is unknown due to uncertainty in both the system and environment.
Instead, we only have access to the \emph{nominal lifted law} $\bar{\mathbb Y}_t = \bar{\mathbb X}_t \otimes \bar{\mathbb Z}_t$ of the \emph{nominal lifted process} $\bar{Y}_t := (\bar{X}_t,\bar{Z}_t)$.\footnote{The lifted laws of the respective lifted processes are their product measure due to the assumed independence of the system and the environment.}
The mismatch between the uncertain (true) $\mathbb Y_t$ and its nominal (known) estimate $\bar{\mathbb Y}_t$ motivates the problem we aim to address.
\begin{problem}
Using only the nominal law $\bar{\mathbb Y}_t$, we wish to design the input process $U_t$ in~\eqref{eqn:true_sys} such that the unknown law $\mathbb Y_t$ of the uncertain lifted process $Y_t$ satisfies~\eqref{eqn:chance_const_lifted} for a prescribed probability $\beta \in (0,1)$, which implies that the unknown process $X_t$ operates safely in the uncertain environment $Z_t$ with probability $1-\beta$.
\end{problem}
% Consequently, direct evaluation of~\eqref{eqn:chance_const_lifted}
% is infeasible.

% \vspace{1cm}
% \todo[inline, backgroundcolor=BlueGreen, textcolor=white]{Cannibalizing the following for the problem statement.}
% To describe safety in a unified manner, we introduce the lifted random
% variable $Y_t := (X_t,Z_t) \in \mathcal{Y}
% := \mathbb{R}^n \times \mathbb{R}^{n_z}$,

% and define the lifted unsafe set $\mathcal{C} :=
% \{(x,z) \in \mathbb{R}^n \times \mathbb{R}^{n_z} : x \in \mathcal{O}(z)\}$.
% The unsafe event can then be written equivalently as $Y_t \in \mathcal{C}$.
% Accordingly, the safety requirement can be expressed through the chance
% constraint
% \begin{equation}
% \label{eqn:chance_const_lifted}
% \mathbb{P}\left(Y_t \in \mathcal{C}\right)
% \leq \beta,
% \end{equation}
% where $\beta \in (0,1)$ denotes the allowable violation probability.

% Since both the system and environment states are uncertain and their true
% probability laws are unknown, the probability law of $Y_t$ is generally
% unavailable. Consequently, direct evaluation of~\eqref{eqn:chance_const_lifted}
% is intractable.

% \vspace{1cm}
% \todo[inline]{The following absorbed by the subsequent section, need confirmation!}

%%%%%%%%%%%%%%%%%%%%%%%%%%%%%%%%%%%%%%%%%%%%%%%%%%%%%%%%%%
\section{Design of the \frameworkname \ Framework}\label{sec:dr_l1}

The key difficulty in solving the problem above lies in the absence of a quantifiable relationship between the true lifted law $\mathbb{Y}_t$ and its nominal counterpart $\bar{\mathbb{Y}}_t$. Without such a measure of deviation, the chance constraint~\eqref{eqn:chance_const_lifted} cannot be enforced using nominal information alone.

We address this by introducing a distributional bound: suppose there exists a known $\rho_y \in \mathbb{R}_{>0}$ such that $\Wass_{2p}(\mathbb{Y}_t,\bar{\mathbb{Y}}_t) \leq \rho_y$ for all $t \geq 0$ and some $p \in \mathbb{N}$. 
Then, safety can be enforced via the following DR-CC:
\begin{equation}\label{eqn:dr_chance_prop}
    \sup_{\nu \in \mathbb{B}_{2p}(\bar{\mathbb{Y}}_t, \rho_y)}
    \mathbb{P}_{\nu}(Y_t \in \mathcal{C}) \le \beta, \quad \forall t \geq 0,
\end{equation}
where $\mathbb{B}_{2p}(\bar{\mathbb{Y}}_t, \rho_y)$ denotes the $2p$-Wasserstein ambiguity set of radius $\rho_y$ centered at $\bar{\mathbb{Y}}_t$ as defined in~\eqref{eqn:amb_set}, and $\mathbb{P}_\nu$ denotes probability under the measure $\nu$, i.e., $\mathbb{P}_{\nu}(Y_t \in \mathcal{C}) := \nu(\mathcal{C})$.

We therefore design \frameworkname\ to construct such an ambiguity set, so that the DR-CC~\eqref{eqn:dr_chance_prop} can be satisfied in a tractable way. 
The resulting framework, illustrated in~\Cref{fig:overview}, consists of two hierarchical components:
\begin{enumerate}
    \item a high-level \textbf{DR-MPC planner} that enforces~\eqref{eqn:dr_chance_prop}, assuming an ambiguity set containing $\mathbb{Y}_t$ is available, i.e., $\mathbb{Y}_t \in \mathbb{B}_{2p}(\bar{\mathbb{Y}}_t, \rho_y)$; and
    \item a low-level \textbf{robust-adaptive tracking controller}, based on the \ellone-AC architecture, that guarantees $\Wass_{2p}(\mathbb{X}_t,\bar{\mathbb{X}}_t) \leq \rho_x$, where $\rho_x$ is known \emph{a priori}. 
    Combined with the data-driven bound $\Wass_{2p}(\mathbb{Z}_t,\bar{\mathbb{Z}}_t) \leq \rho_z$ for the environment, this yields $\rho_y = \rho_x + \rho_z$.
\end{enumerate}

We next present the design of the high-level DR-MPC planner, followed by the low-level tracking controller.
%%%%%%%%%%%%%%%%%%%%%%%%%%%%%%%%%%%%
\subsection{DR-MPC for High-Level Planning}\label{sec:drmpc}
% \todo[inline]{Turn $\bar{\mathbf{X}}$, $\bar{\mathbf{U}}$ into $\bar{\mathbf{x}}$, $\bar{\mathbf{u}}$}
% \todo[inline]{Mention on the discrete nature of MPC and the constraints.}

As stated above, we design a high-level DR-MPC for the nominal system such that the DR-CC~\eqref{eqn:dr_chance_prop} is enforced. As is standard in MPC, we consider a discrete-time formulation of the nominal system~\eqref{eqn:nom_sys} with sampling period $\Delta t > 0$.

\begin{remark}
The MPC layer is discrete-time, while the tracking controller is continuous-time. 
Hence, the DR-CC~\eqref{eqn:dr_chance_prop} is enforced only at sampling instants and does not guarantee inter-sample safety. 
This can be addressed by absorbing discretization error (as a function of $\Delta t$) into the ambiguity radius $\rho_y$, which we omit for simplicity.
\end{remark}

%%%%%%%%%%%%%%%%%%%%%%
Define $t_k := t_0 + k\Delta t$ and let $\xdisc_k \coloneqq \bar X_{t_k}$ and $\udisc_k \coloneqq \bar U_{t_k}$ denote the sampled nominal state and input, respectively. 
Under a zero-order-hold input, i.e., $\bar U_t = \udisc_k$ for $t \in [t_k,t_{k+1})$, the dynamics of $\xdisc_k$ evolve as
\begin{align}\label{eqn:nom_sys:discrete}
    \xdisc_{k+1}
    =
    A_d \xdisc_k + B_d \udisc_k + \wdisc_k,
    \quad 
    \xdisc_0 = \bar{X}_0 \sim \bar{\xi}_0,
\end{align}
where $\wdisc_k \overset{i.i.d.}{\sim} \mathcal{N}(0,\Sigma_d)$. 
The matrices $\Sigma_d \in \mathbb{S}^n$, $A_d \in \mathbb{R}^{n \times n}$ and $B_d \in \mathbb{R}^{n \times m}$ are obtained from the analytic solution of~\eqref{eqn:nom_sys} under piecewise constant input $\udisc_k$~\cite[Chp.~5]{oksendal2013stochastic}.

We denote by $\left\{t_{j|k}\right\}_{j=0}^{K-1}$ the prediction horizon of length $K \in \mathbb{N}$ with $t_{0|k} \coloneqq t_k$.
Over this horizon, we define the control sequence $\mathbf{\udisc}_{k} = \{\udisc_{j|k}\}_{j=0}^{K-1}$ and state sequence $\mathbf{\xdisc}_{k} = \{\xdisc_{j|k}\}_{j=0}^K$, with corresponding probability laws $\left\{ \xdisclaw{j}{k} \right\}_{j=0}^{K}$. 
At each step $j$, we further define the discrete nominal lifted law
 $\ydisclaw{j}{k} \coloneqq 
\xdisclaw{j}{k} \otimes \zdisclaw{j}{k}$ with $\zdisclaw{j}{k} = \mathfrak{L}(\zdisc_{j+k})$.

For a given failure probability $\beta \in (0,1)$ and a radius $\rho_y$ (\Cref{subsec:DistCerts}), we formulate the DR-MPC problem as follows:
\begin{subequations}\label{eqn:dr_mpc}
    \begin{align}
        \min_{\bar{\mathbf{U}}_{k}, \bar{\mathbf{X}}_{k}}\;
        & \sum_{j=0}^{K-1}\mathbb{E} \left[\ell_j(\xdisc_{j|k}, \udisc_{j|k})\right]
        +
        \mathbb{E} \left[\ell_N(\xdisc_{K|k})\right]
        \label{eqn:drmpc_cost}
        \\
        \text{s.t.}\;
        & \xdisc_{j+1|k}  = 
        A_d \xdisc_{j|k}  +  B_d \udisc_{j|k}  +  \wdisc_{j|k}, \qquad j = 0,\dots, K-1,
        \label{eqn:drmpc_dyn}
        \\
        & \udisc_{j|k}\in \mathcal{U}, \qquad j=0,\dots,K-1,
        \label{eqn:drmpc_set_1}
        \\
        & \mathbb{E}[\xdisc_{j|k}]\in\mathcal{X},
        \qquad j=1,\dots,K,
        \label{eqn:drmpc_set_2}
        \\
        & \xdisc_{0|k}= \bm{x},
        \label{eqn:drmpc_init}
        \\
        & \sup_{ \nu \in \mathbb{B}_{2p}(\ydisclaw{j}{k},\rho_y)}\!
        \mathbb{P}_{\nu}\big(\ydisc_{j|k} \! \in \mathcal C\big) \!
        \le \beta, \qquad j=1,\dots,K,
        \label{eqn:drmpc_safe}
    \end{align}
\end{subequations}
where $\ell_j:\mathbb{R}^n\times\mathbb{R}^m\to\mathbb{R}$ and $\ell_N:\mathbb{R}^n\to\mathbb{R}$ denote the stage and terminal costs, respectively, and $\bm{x} \in \mathbb{R}^n$ in~\eqref{eqn:drmpc_init} is the initial condition.\footnote{Throughout the paper, we adopt time-varying quadratic tracking costs of the form
$\ell_j(x,u) = (x-x^{\mathrm{des}}_j)^\top Q (x-x^{\mathrm{des}}_j)
+ (u-u^{\mathrm{des}}_j)^\top R (u-u^{\mathrm{des}}_j)$,
with terminal cost
$\ell_N(x) = (x-x^{\mathrm{des}}_N)^\top Q_f (x-x^{\mathrm{des}}_N)$,
where $x_j^{\mathrm{des}}$ and $u_j^{\mathrm{des}}$ denote the desired state and input at time step $j$, respectively, and $Q \succeq 0$, $R \succ 0$, and $Q_f \succeq 0$ are weighting matrices.}
Constraint~\eqref{eqn:drmpc_dyn} enforces the nominal dynamics, where $\{\wdisc_{j|k}\}$ is an i.i.d.\ Gaussian sequence with distribution $\mathcal N(0,\Sigma_d)$. Since $\xdisc_{j|k}$ is stochastic, pathwise constraints are generally intractable or overly conservative; we therefore impose the mean constraint~\eqref{eqn:drmpc_set_2}, which represents a standard relaxation in stochastic MPC, while~\eqref{eqn:drmpc_set_1} ensures input feasibility.
Finally,~\eqref{eqn:drmpc_safe} enforces the DR-CC~\eqref{eqn:dr_chance_prop}.

For each initial state $\bm{x}$, let $\mathbf{\udisc}_k^\star(x)
\coloneqq
\{\udisc_{j|k}^\star(x)\}_{j=0}^{K-1}$
denote an optimal control sequence of~\eqref{eqn:dr_mpc}, with associated state sequence
$\mathbf{\xdisc}_k^\star(x)
\coloneqq
\{\xdisc_{j|k}^\star(x)\}_{j=0}^{K}$. The induced MPC law is given by the first input,
$\kappa_{\mathrm{MPC}}(x) \coloneqq \udisc_{0|k}^\star(x)$.
Thus, although~\eqref{eqn:dr_mpc} is an open-loop finite-horizon problem, its receding-horizon implementation induces an implicit state-feedback law for~\eqref{eqn:nom_sys:discrete}.

At each planning step $k$, we initialize~\eqref{eqn:drmpc_init} with the measured (true) state $\bm{x} = X_{t_k}$ and compute the optimal nominal state-input trajectory
$\{(\xdisc_{j|k}^\star(\xdisctrue_k),\udisc_{j|k}^\star(\xdisctrue_k))\}_{j=0}^{K-1}$, which is then passed to a low-level tracking controller.

\subsection{Low-Level Robust Adaptive Tracking Control}

We now design the low-level robust adaptive control input $U_t$ for the uncertain system~\eqref{eqn:true_sys} to track the reference trajectories $(\mathbf{\xdisc}_k^\star(x), \mathbf{\udisc}_k^\star(x))$ generated by the DR-MPC. We decompose the input as
\begin{align}\label{eqn:aug_cont}
    U_t = U_{b,t} + U_{\mathcal{L}_1,t},
\end{align}
where $U_{b,t}$ is a \emph{baseline} controller designed for tracking in the absence of uncertainties, and $U_{\mathcal{L}_1,t}$ is the \ellone-AC augmentation designed to handle the epsitemic uncertainties in~\eqref{eqn:nom_sys}. 
The baseline controller $\mathcal{F}_{b}(t,\cdot):\mathbb{R}^n \rightarrow \mathbb{R}^m$, $t \geq 0$, is chosen to be a standard feedback-feedforward operator
\begin{equation}\label{eqn:baseline}
    U_{b,t} 
    = 
    \mathcal{F}_{b}\left(t, X_t\right)
    = 
    K_b \left(X_t - \Breve{x}_t \right) + \Breve{u}_t, 
\end{equation}
where $K_b \in \mathbb{R}^{m \times n}$ is the feedback gain that is chosen to ensure that $A_\mu+BK_b$ is Hurwitz. 
The reference $(\Breve{x}_t, \Breve{u}_t)$ is given by 
\[
\left( \Breve{x}_t, \Breve{u}_t \right) = \left( \xdisc_{0|k}^\star(\xdisctrue_k), \udisc_{0|k}^\star(\xdisctrue_k) \right),
\]
for each planning interval $t\in[t_k,t_{k+1})$, $t_k=t_0+k\Delta t$, from~\Cref{sec:drmpc}.
Since the nominal system~\eqref{eqn:nom_sys} is uncertainty-free, we set $\bar{U}_t = \mathcal{F}_{b}\left(t, \bar{X}_t\right)$.

Under this baseline law, the nominal closed-loop system is stable since $A_\mu + BK_b$ is Hurwitz, and its moments remain bounded under mild conditions on the diffusion~\cite{tsukamoto2020robust}. However, when applied to the uncertain system, the unknown terms $\Lambda_\mu$ and $\Lambda_\sigma$ can induce unpredictable and unquantifiable divergence between the nominal and true law, as measured by $\Wass_{2p}(\mathbb{X}_t,\bar{\mathbb{X}}_t)$. Hence, as in~\eqref{eqn:aug_cont}, we augment the baseline input with an adaptive component $U_{\mathcal{L}_1,t}$ that can mitigate the effects of the uncertainties.

% Since $A_\mu+BK_b$ is Hurwitz, the moments of the nominal system state $\bar{X}_t$ can be bounded when the diffusion $A_\sigma$ is small enough to be dominated by the closed loop drift $A_\mu+BK_b$~\cite{tsukamoto2020robust}.
% However, if instead of~\eqref{eqn:aug_cont}, one were to set $U_t = \mathcal{F}_b \left(t,X_t\right)$ for the uncertain system~\eqref{eqn:true_sys}, the uncertainties $\Lambda_\mu$ and $\Lambda_\sigma$ can lead to the unpredictable and unquantifiable divergence between the nominal and true laws as captured by $\Wass_{2p}(\mathbb{X}_t,\bar{\mathbb{X}}_t)$. 
% Hence, as in~\eqref{eqn:aug_cont}, we augment the baseline input with the adaptive input $U_{\mathcal{L}_1,t}$ that can mitigate the effects of the uncertainties. 
The adaptive input is given by the \ellonedrac law~\cite{gahlawat2025DRAC}, defined as the output of the following \emph{low pass filter}:
\begin{align}\label{eqn:True:Filter}
    \ULt{t} 
    =  
    - \Boldomega \int_0^t \expo{-\Boldomega(t-\tau)} \Theta_{ad}\Lhat{\tau} d\tau,  
\end{align}
where $\Boldomega \in \mathbb{R}_{>0}$ is the \textbf{filter bandwidth}, $ \Theta_{ad} = \begin{bmatrix}\mathbb{I}_m & 0_{m,n-m}  \end{bmatrix} \bar{B}^{-1} \in \mathbb{R}^{m \times n}$,  $\bar{B} = [B \; B^\perp]\in\mathbb{R}^{n\times n}$ and $B^\perp$ is chosen such that $\operatorname{Im}(B^\perp) = \operatorname{ker}(B^\top)$ and $\operatorname{rank}(\bar{B})=n$.
The \emph{adaptive estimate} $\hat{\Lambda} \in \mathbb{R}^m$ is updated via the following \emph{adaptation law}:
\begin{equation}\label{eqn:True:AdaptationLaw}
    \Lhat{t} 
    =  
    0_m \indicator{[0,\BoldTs)}{t} 
    +
    \lambda_s \br{1 - e^{\lambda_s \BoldTs}}^{-1}
    \sum_{i=1}^{\lfloor \frac{t}{\BoldTs} \rfloor}    
    \Xtildet{i\BoldTs}
    \indicator{[i\BoldTs,(i+1)\BoldTs)}{t},
\end{equation} 
where $\Xtildet{i\BoldTs} = \Xhatt{i\BoldTs} - \Xt{i\BoldTs}$, and $\BoldTs \in \mathbb{R}_{>0}$ is the \textbf{sampling period}.
The parameter $\lambda_s \in \mathbb{R}_{>0}$ contributes to the solution $\Xhatt{t}$ of the \emph{process predictor} given by: 
\begin{equation}\label{eqn:True:ProcessPredictor}
    \Xhatt{t} 
    =
    x_0
    +
    \int_0^t 
        \left(-\lambda_s \mathbb{I}_n \Xtildet{\tau}+ A_\mu \Xt{\tau} + B U_{b,\tau}) \right) d\tau
    +
    \int_0^t 
        \left(B \ULt{\tau} + \Lhat{\tau}\right) d\tau,
\end{equation} 
where $\Xtildet{t} = \Xhatt{t} - \Xt{t}$.
We collectively refer to $\cbr{\Boldomega,\BoldTs,\lambda_s}$ as the \textbf{control parameters} for the \ellonedrac input~\eqref{eqn:True:Filter}~-~\eqref{eqn:True:ProcessPredictor}.

\subsection{Distributional Certificates and Closed-Loop Safety}\label{subsec:DistCerts}

We now formalize the distributional guarantees provided by the input~\eqref{eqn:aug_cont} applied to the uncertain system~\eqref{eqn:true_sys}. 
The following result, shown in~\cite[Thm.~4.1]{gahlawat2025DRAC}, establishes that the low-level robust adaptive input bounds the Wasserstein distance between the uncertain law $\mathbb{X}_t$ and nominal law $\bar{\mathbb{X}}_t$.
%%%%%%%%%%%%%%%%%%%%%%%%%%%%%%%%%%
\begin{theorem}\label{thm:dr_bound}
Suppose~\Cref{ass:true_sys} holds, and let the control input be given by~\eqref{eqn:aug_cont} with $K_b$ in~\eqref{eqn:baseline} ensuring that $A_\mu + B K_b$ is Hurwitz.   
If $\xi_0 \sim \mathcal{P}_{2 p}(\mathbb{R}^{n})$ for some $p \in \mathbb{N}$, then there exists a known constant $\rho_x>0$ such that
$\mathbb X_t \in \mathbb B_{2p}(\bar{\mathbb X}_t,\rho_x),
\; \forall t \ge 0$.
\end{theorem}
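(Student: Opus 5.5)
The bound will come from a coupling. Put the true process $X_t$ from~\eqref{eqn:true_sys} and the nominal process $\bar X_t$ from~\eqref{eqn:nom_sys} on one probability space. Both are driven by the same baseline law $\mathcal F_b$, and the initial conditions are coupled by an optimal coupling of $(\xi_0,\bar\xi_0)$. Since $\Wass_{2p}(\mathbb X_t,\bar{\mathbb X}_t)\le \|X_t-\bar X_t\|_{L_{2p}}$ for any coupling, it suffices to bound $\sup_{t\ge0}\mathbb E\|X_t-\bar X_t\|^{2p}$ by a constant that is known a priori.

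I will split the error through the standard $\mathcal L_1$ reference process $X^r_t$. This is the true system driven by the non-implementable input $-\omega\int_0^t e^{-\omega(t-\tau)}\Lambda_\mu(\tau,X^r_\tau)\,d\tau$, i.e.\ the ideal low-pass-filtered cancellation of the drift uncertainty. The triangle inequality then gives
\[
\Wass_{2p}(\mathbb X_t,\bar{\mathbb X}_t)\le \Wass_{2p}(\mathbb X_t,\mathbb X^r_t)+\Wass_{2p}(\mathbb X^r_t,\bar{\mathbb X}_t),
\]
and I will bound the two terms separately.

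\textbf{Reference vs.\ nominal.} Write the difference as $e_t:=X^r_t-\bar X_t$. Its drift is governed by the Hurwitz matrix $A_\mu+BK_b$ (the references $(\Breve x_t,\Breve u_t)$ cancel because both processes use the same $\mathcal F_b$). It is forced by the filter residual $B(1-\mathcal F_\omega)\Lambda_\mu$ and by the diffusion mismatch $B\Lambda_\sigma\,dW_t$ together with the independent-noise terms.

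I will apply It\^o's formula to $V(e)=(e^\top P e)^p$, where $P$ solves the Lyapunov equation for $A_\mu+BK_b$. The terms are handled as follows.
\begin{itemize}
\item The filter residual is controlled by the Lipschitz property of $\Lambda_\mu$: an integration by parts against $e^{-\omega(t-\tau)}$ yields terms of order $1/\omega$ times the drift and diffusion of $X^r$.
\item The diffusion mismatch is handled with the sublinear growth bound $\|\Lambda_\sigma\|^2\le C(1+\|\xi\|^2)^{1/2}$ from \Cref{ass:true_sys}. This makes the It\^o correction grow only like $\|e\|^{2p-1}$, so it is dominated by the negative-definite term.
\item Moments of $\bar X_t$ are uniformly bounded because the nominal closed loop is stable.
\end{itemize}
Combining these through Young's inequality, and choosing $\omega$ large enough, gives a differential inequality $\tfrac{d}{dt}\mathbb E V\le -c\,\mathbb E V + c'$. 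This yields a uniform-in-time bound $\rho_r$.

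\textbf{True vs.\ reference.} This term is governed by the predictor error $\tilde X_t$. The piecewise-constant adaptation law~\eqref{eqn:True:AdaptationLaw} is constructed so that on each interval $[iT_s,(i+1)T_s)$ the estimate $\hat\Lambda$ reproduces the uncertainty from the previous interval, up to an error that vanishes as $T_s\to0$. This uses the exact solution of $\dot{\tilde X}=-\lambda_s\tilde X+\ldots$ over one sampling period. I will bound $\mathbb E\|\hat\Lambda_t-\Lambda_\mu(t,X_t)\|^{2p}$ in terms of $T_s$, the Lipschitz constants, and BDG estimates for the stochastic integral over one period. Then I will repeat the Lyapunov argument for $X_t-X^r_t$, obtaining $\rho_a(T_s,\omega)$, which is small for small $T_s$. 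Setting $\rho_x=\rho_r+\rho_a+\Wass_{2p}(\xi_0,\bar\xi_0)$-type initial terms gives the claim; every constant depends only on known quantities.

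\textbf{Main obstacle.} The hard part is uniformity in time together with the circularity: the bounds on the residuals require moment bounds on $X^r$ and $X$, which in turn depend on the bounds being proved. I would first attempt a stopping-time/contradiction argument. Let $\tau^\ast$ be the first time $\|X_t-\bar X_t\|_{L_{2p}}$ reaches $\rho_x$. Show that on $[0,\tau^\ast]$ the a priori moment bounds hold with strict inequality, provided $\omega$ is large and $T_s$ is small enough to satisfy a stability-type condition. This contradicts the definition of $\tau^\ast$, so the bound holds for all $t\ge0$. Because this is precisely \cite[Thm.~4.1]{gahlawat2025DRAC}, the fallback is to verify that our setting (baseline $\mathcal F_b$ with piecewise-constant references) meets its hypotheses and invoke it directly.
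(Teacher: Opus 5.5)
The paper gives no argument of its own here. The statement is \cite[Thm.~4.1]{gahlawat2025DRAC} quoted, so your fallback (check the hypotheses and invoke that theorem) is exactly the paper's route. When you do check them, note that \eqref{eqn:drmpc_init} forces $\Breve x_t=X_{t_k}$ on $[t_k,t_{k+1})$, so the nominal loop contains sampled feedback of the true state. The uniform moment bound on $\bar X_t$ that your Lyapunov step uses therefore does not follow from $A_\mu+BK_b$ being Hurwitz and must be argued separately; the paper passes over this as well.

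Your self-contained sketch, however, has a genuine gap in the true-versus-reference step. The predictor \eqref{eqn:True:ProcessPredictor} has no diffusion term, so
$d\tilde X_t=(-\lambda_s\tilde X_t+\hat\Lambda_t-B\Lambda_\mu(t,X_t))\,dt-(A_\sigma+B\Lambda_\sigma(t,X_t))\,dW_t$.
With \eqref{eqn:True:AdaptationLaw}, the estimate on $[iT_s,(i+1)T_s)$, $i\ge1$, is
\[
\hat\Lambda_t=\frac{\lambda_s}{e^{\lambda_sT_s}-1}\int_{(i-1)T_s}^{iT_s}e^{-\lambda_s(iT_s-\nu)}\Big(B\Lambda_\mu(\nu,X_\nu)\,d\nu+\big(A_\sigma+B\Lambda_\sigma(\nu,X_\nu)\big)\,dW_\nu\Big).
\]
Its stochastic part has second moment of order $1/T_s$: your BDG bound gives $T_s^{1/2}$, which the prefactor then divides by $T_s$. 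Hence $\mathbb E\|\Theta_{ad}\hat\Lambda_t-\Lambda_\mu(t,X_t)\|^{2p}$ grows like $T_s^{-p}$ rather than vanishing.

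Worse, as $T_s\to0$ the implemented input tends to
$-\omega\int_0^t e^{-\omega(t-\tau)}\big(\Lambda_\mu\,d\tau+\Theta_{ad}(A_\sigma+B\Lambda_\sigma)\,dW_\tau\big)$.
So $X$ does not approach your $X^r$, which filters only the drift. Since $\Theta_{ad}B=\mathbb I_m$, the difference $X-X^r$ keeps the state response to the filtered matched noise $\Theta_{ad}(A_\sigma+B\Lambda_\sigma)\,dW$. That response is nonzero unless this matrix vanishes, and it disappears neither as $T_s\to0$ nor as $\omega\to\infty$. Thus ``$\rho_a$ small for small $T_s$'' is false. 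The closure of your stopping-time argument relies on making this term small by shrinking $T_s$, so it fails as set up.

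The missing idea is never to compare $\hat\Lambda$ with $\Lambda_\mu$ pointwise. From the predictor and \eqref{eqn:true_sys},
$\Theta_{ad}\hat\Lambda_t\,dt=\Theta_{ad}(d\tilde X_t+\lambda_s\tilde X_t\,dt)+\Lambda_\mu(t,X_t)\,dt+\Theta_{ad}(A_\sigma+B\Lambda_\sigma(t,X_t))\,dW_t$.
Insert this into \eqref{eqn:True:Filter} and integrate the deterministic kernel $\omega e^{-\omega(t-\tau)}$ by parts against $\tilde X$. The $\tilde X$-contribution to $U_{\mathcal L_1,t}$ is then bounded by a constant times $(\omega+\lambda_s)\sup_{\tau\le t}\|\tilde X_\tau\|_{L_{2p}}$. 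Given the a priori moment bounds, $\tilde X$ is $O(T_s^{1/2})$ in $L_{2p}$, and this is where smallness of $T_s$ relative to $\omega$ and $\lambda_s$ legitimately enters.

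Correspondingly, the comparison process must carry the filtered stochastic term $-\omega\int_0^t e^{-\omega(t-\tau)}\Theta_{ad}(A_\sigma+B\Lambda_\sigma(\tau,X^r_\tau))\,dW_\tau$, and your reference-versus-nominal step must absorb it. This term includes the matched part $\Theta_{ad}A_\sigma$ of the \emph{known} noise, which the true loop attenuates but the nominal loop does not. It therefore contributes a nonvanishing but finite amount to $\rho_x$. Bound it at the level of the convolution kernel by exchanging the order of integration: a pointwise $L_{2p}$ bound on $\omega\int_0^t e^{-\omega(t-\tau)}(\cdot)\,dW_\tau$ grows like $\sqrt{\omega}$ and would work against your choice of large $\omega$.

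A smaller point: with $\Lambda_\mu$ only Lipschitz you cannot apply It\^o's formula to $\Lambda_\mu(\tau,X^r_\tau)$ for the integration by parts you describe. H\"older-$1/2$ increment bounds give a drift residual of order $\omega^{-1/2}$, not $\omega^{-1}$, which still suffices.
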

%%%%%%%%%%%%%%%%%%%%%%%%%%%%%%%%%%%%%

\Cref{thm:dr_bound} shows that the adaptive controller keeps the true state distribution within a Wasserstein ball around the nominal distribution, thereby certifying the system-side ambiguity set used for planning. While this result characterizes the effects of the epistemic uncertainties on the system state, the environmental state distribution still remains to be estimated from data. 
The following lemma provides a finite-sample bound on the Wasserstein distance between the true environmental state distribution and its empirical estimate.

\begin{lemma}\label{lem:obs_bound}
Let $\bar{\mathbb Z}_t$ denote the empirical distribution constructed from $N$ i.i.d.\ samples of $\mathbb Z_t$. Assume that $\mathbb Z_t$ is light-tailed in the sense that there exist constants $\alpha>2p$ and $A>0$ such that $\mathbb E[\exp(\|Z_t\|^\alpha)]\le A$. Then there exist constants $c_1,c_2>0$ depending only on $\alpha$, $A$, and $n_z$ such that for any confidence level $\delta_z\in(0,1)$, one has
\[
\mathbb{P}_z^{N} \left(
\Wass_{2p} \big(\mathbb Z_t,\bar{\mathbb Z}_t\big)
\le \rho_z(N,\delta_z)
\right)\ge 1-\delta_z,
\]
where $\mathbb{P}_z^{N}$ is the $N$-fold product of the environmental state distribution $\mathbb{P}_z$ and
\[
\rho_z(N,\delta_z)=
\begin{cases}
\left(\frac{\log(c_1/\delta_z)}{c_2 N}\right)^{\min\{\frac{2p}{n_z},\frac{1}{2}\}}
& \text{if } N \ge \frac{\log(c_1/\delta_z)}{c_2}, \\
\left(\frac{\log(c_1/\delta_z)}{c_2 N}\right)^{\frac{2p}{\alpha}}
& \text{otherwise.}
\end{cases}
\]
\end{lemma}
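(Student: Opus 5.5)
This lemma is a direct instance of the Fournier--Guillin concentration inequality for empirical measures. That is the standard route used in Wasserstein DRO; see Esfahani--Kuhn, Theorem 3.4, whose $q=1$ statement we generalize to order $2p$. The plan is to apply that measure-concentration result to $\mathbb Z_t$ in $\mathbb R^{n_z}$ with Wasserstein order $q=2p$. We then invert the resulting tail bound to obtain $\rho_z(N,\delta_z)$.

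First, we verify the hypotheses. Fournier--Guillin (Theorem 2) requires a light-tail condition $\mathbb E[\exp(\gamma\|Z\|^\alpha)]<\infty$ with $\alpha>q$. This is exactly the assumption $\mathbb E[\exp(\|Z_t\|^\alpha)]\le A$ with $\alpha>2p$. The theorem bounds the deviation of $\Wass_q^q$ rather than $\Wass_q$, so we work with $\Wass_{2p}^{2p}(\mathbb Z_t,\bar{\mathbb Z}_t)$. It yields constants $c_1,c_2>0$, depending only on $\alpha$, $A$, $n_z$ and $q$, such that for all $N\ge1$ and $\epsilon>0$:
\[
\mathbb P_z^N\bigl(\Wass_{2p}(\mathbb Z_t,\bar{\mathbb Z}_t)\ge \epsilon\bigr)\le
\begin{cases}
c_1\exp\bigl(-c_2 N\epsilon^{\max\{n_z/(2p),2\}}\bigr), & \epsilon\le 1,\\
c_1\exp\bigl(-c_2 N\epsilon^{\alpha/(2p)}\bigr), & \epsilon>1.
\end{cases}
\]
The exponents come from rewriting $\Wass_q^q\ge \epsilon^q$. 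In Fournier--Guillin the small-deviation exponent is $\max\{n_z/q,2\}$ in $\Wass_q^q$, which becomes $\max\{n_z/(2p),2\}$ here. The large-deviation exponent is $\alpha/q=\alpha/(2p)$. The borderline case $n_z=2\cdot 2p$ carries an extra logarithmic factor. As in Esfahani--Kuhn, we absorb it by taking a slightly smaller $c_2$, or we treat it via the generic $\max$ exponent.

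Second, we invert the bound. Set the right-hand side equal to $\delta_z$ and solve for $\epsilon$. In the regime $\epsilon\le1$ this gives
\[
\epsilon=\Bigl(\tfrac{\log(c_1/\delta_z)}{c_2N}\Bigr)^{1/\max\{n_z/(2p),2\}}=\Bigl(\tfrac{\log(c_1/\delta_z)}{c_2N}\Bigr)^{\min\{2p/n_z,1/2\}}.
\]
This value is at most $1$ precisely when $N\ge \log(c_1/\delta_z)/c_2$. In the complementary case we take $\epsilon=\bigl(\log(c_1/\delta_z)/(c_2N)\bigr)^{2p/\alpha}$, which exceeds $1$. In both cases the failure probability is then at most $\delta_z$, and taking complements gives the stated bound. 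Without loss of generality we may enlarge $c_1$ so that $c_1>1$, which makes $\log(c_1/\delta_z)>0$.

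The main obstacle is bookkeeping rather than substance. One must correctly translate the Fournier--Guillin statement, which is phrased for $\Wass_q^q$ with general $q$ and requires $q<\alpha$, into the $\Wass_{2p}$ form above. One must also check the case consistency: the threshold $N\ge\log(c_1/\delta_z)/c_2$ has to coincide exactly with $\epsilon\le1$. Finally, the constants $c_1,c_2$ also depend on $p$. This dependence is implicit, since $p$ is fixed throughout the paper.
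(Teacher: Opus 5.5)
Your route is the one the paper has in mind (its only justification is the citation of Fournier--Guillin and Esfahani--Kuhn). However, the step where you convert the Fournier--Guillin bound into a bound on $\Wass_{2p}$ is wrong.

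Fournier--Guillin, with order $q=2p$ under their condition (1), control the transport cost $x=\Wass_{2p}^{2p}(\mathbb Z_t,\bar{\mathbb Z}_t)$. Away from the critical case $n_z=4p$, they give $\mathbb P_z^N(\Wass_{2p}^{2p}\ge x)\le C\exp(-cNx^{\max\{n_z/(2p),2\}})$ for $x\le 1$ and $\le C\exp(-cNx^{\alpha/(2p)})$ for $x>1$. If you actually carry out the substitution you describe, $x=\epsilon^{2p}$, the exponents on $\epsilon$ become $2p\max\{n_z/(2p),2\}=\max\{n_z,4p\}$ and $\alpha$. They do not become $\max\{n_z/(2p),2\}$ and $\alpha/(2p)$. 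Your displayed inequality is therefore valid only with $\Wass_{2p}^{2p}$, not $\Wass_{2p}$, on the left. Your inversion then proves $\Wass_{2p}^{2p}(\mathbb Z_t,\bar{\mathbb Z}_t)\le\rho_z(N,\delta_z)$, i.e.\ $\Wass_{2p}\le\rho_z(N,\delta_z)^{1/(2p)}$, whose exponents are $\min\{1/n_z,1/(4p)\}$ and $1/\alpha$. Esfahani--Kuhn's exponents can be used verbatim only because they work with order $1$, where $\Wass_1^1=\Wass_1$.

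This loss cannot be absorbed into $c_1,c_2$. In the regime $N\ge\log(c_1/\delta_z)/c_2$ you have $\rho_z\le 1$, hence $\rho_z^{1/(2p)}\ge\rho_z$, and the bound as stated is false in general. Take $n_z\le 4p$ and $\mathbb Z_t=\tfrac12(\delta_0+\delta_e)$ with $\|e\|=1$. Then $\Wass_{2p}^{2p}(\mathbb Z_t,\bar{\mathbb Z}_t)=|\hat\theta_N-\tfrac12|$, where $\hat\theta_N$ is the empirical frequency of $0$. By the central limit theorem, $\Wass_{2p}\ge N^{-1/(4p)}$ with probability tending to $\mathbb P(|G|\ge 2)\approx 0.046$, where $G\sim\mathcal N(0,1)$. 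The claimed radius is of order $N^{-1/2}$, which is eventually smaller because $p\ge 1$, so the claim fails for any $\delta_z<0.04$.

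So no argument can deliver the lemma with this $\rho_z$. The conclusion must be weakened to $\Wass_{2p}^{2p}\le\rho_z$, equivalently to the radius $\rho_z^{1/(2p)}$; the paper's statement carries the same slip.

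A secondary issue concerns the critical dimension $n_z=4p$. There the Fournier--Guillin rate is $\exp(-cN(x/\log(2+1/x))^2)$, and since $\log(2+1/x)$ is unbounded as $x\to 0$, this cannot be absorbed by shrinking $c_2$. Esfahani--Kuhn exclude that case ($m\neq 2$) rather than absorbing it. You must either exclude $n_z=4p$ or give up an arbitrarily small amount in the exponent.
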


\Cref{lem:obs_bound} follows from concentration inequalities for the Wasserstein distance established in the distributionally robust optimization literature~\cite{fournier2015rate,mohajerin2018data}. Together, \Cref{thm:dr_bound} and \Cref{lem:obs_bound} bound the deviation between the true and nominal system and environmental state distributions thus allowing us to construct ambiguity sets used within the DR-CC~\eqref{eqn:dr_chance_prop}.
\begin{theorem}\label{thm:closed_loop_safety}
    Let the hypotheses of \Cref{thm:dr_bound} hold, and the control input 
    be given by~\eqref{eqn:aug_cont}, and fix $\beta \in (0,1)$ and 
    $\delta_z\in(0,1)$. 
    Further suppose the DR-MPC planner~\eqref{eqn:dr_mpc} is feasible at $t_k$ with ambiguity radius $\rho_y=\rho_x+\rho_z$, where $\rho_x$ and $\rho_z \ge \rho_z(N,\delta_z)$ are given by \Cref{thm:dr_bound} and \Cref{lem:obs_bound}, respectively.
    Then, for each sampling time $t_k$, the following property holds for the true closed-loop system:
    \begin{equation}\label{eqn:prob_guar}
        \mathbb P\!\left(
        X_{t_k} \notin \mathcal O(Z_{t_k})
        \right)\ge 1-\beta,
    \end{equation}
    with probability at least $1-\delta_z$ with respect to the environmental data.
\end{theorem}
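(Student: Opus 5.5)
The plan is to show that, on the high-probability event furnished by \Cref{lem:obs_bound}, the true lifted law $\mathbb{Y}_{t_k}$ lies in the ambiguity set used in the DR-CC~\eqref{eqn:drmpc_safe}. Feasibility of~\eqref{eqn:dr_mpc} then transfers the chance-constraint bound to the true law.

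First, define the event $E_z := \{\Wass_{2p}(\mathbb Z_{t_k},\bar{\mathbb Z}_{t_k}) \le \rho_z(N,\delta_z)\}$. By \Cref{lem:obs_bound}, $\mathbb P_z^N(E_z)\ge 1-\delta_z$, and on $E_z$ we also have $\Wass_{2p}(\mathbb Z_{t_k},\bar{\mathbb Z}_{t_k})\le\rho_z$, since $\rho_z\ge\rho_z(N,\delta_z)$. \Cref{thm:dr_bound} gives $\Wass_{2p}(\mathbb X_{t_k},\bar{\mathbb X}_{t_k})\le\rho_x$ deterministically, because it does not depend on the environmental data.

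Second, the key step is the product-measure bound
\[
\Wass_{2p}(\mathbb X\otimes\mathbb Z,\bar{\mathbb X}\otimes\bar{\mathbb Z})\le \Wass_{2p}(\mathbb X,\bar{\mathbb X})+\Wass_{2p}(\mathbb Z,\bar{\mathbb Z}),
\]
where $\mathcal Y$ carries the product Euclidean metric $d_{\mathcal Y}((x,z),(x',z')) = \|(x-x',z-z')\|\le\|x-x'\|+\|z-z'\|$. To prove it, take optimal couplings $\pi_x$ and $\pi_z$ and form the product coupling $\pi_x\otimes\pi_z$, which is a valid coupling of the product measures. Then
\[
\Bigl(\textstyle\int d_{\mathcal Y}^{2p}\,d(\pi_x\otimes\pi_z)\Bigr)^{1/2p}\le \bigl\|\,\|x-x'\|+\|z-z'\|\,\bigr\|_{L^{2p}}\le \Wass_{2p}(\mathbb X,\bar{\mathbb X})+\Wass_{2p}(\mathbb Z,\bar{\mathbb Z})
\]
by Minkowski's inequality in $L^{2p}(\pi_x\otimes\pi_z)$. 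Independence of the system and the environment is what makes the true lifted law the product $\mathbb X_{t_k}\otimes\mathbb Z_{t_k}$, and this independence is used here. On $E_z$ the bound yields $\mathbb Y_{t_k}\in\mathbb B_{2p}(\bar{\mathbb Y}_{t_k},\rho_x+\rho_z)$.

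Third, I identify the nominal lifted law at $t_k$ with the planner's center $\ydisclaw{j}{k'}$ for the appropriate index, namely $\xdisclaw{j}{k'}$ evaluated at time $t_k$ (from planning step $k'=k-j$ with $j\ge1$), which gives $\sup_{\nu\in\mathbb B_{2p}}\nu(\mathcal C)\le\beta$. Since $\mathbb Y_{t_k}$ belongs to that ball, $\mathbb P(Y_{t_k}\in\mathcal C)\le\beta$. Because $\{Y_{t_k}\in\mathcal C\}=\{X_{t_k}\in\mathcal O(Z_{t_k})\}$, taking complements gives~\eqref{eqn:prob_guar} on $E_z$, that is, with probability at least $1-\delta_z$.

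I expect this third step to be the main obstacle. The DR-CC is imposed only for $j\ge1$, while the true state at $t_k$ is re-measured and used to initialize the planner. I therefore need the nominal law appearing in \Cref{thm:dr_bound} to coincide with the planner's predicted law $\xdisclaw{j}{k'}$ at the sampling instant. I would handle this by interpreting $\bar{\mathbb X}_{t_k}$ as the law of the nominal closed-loop process under the reference generated at the previous feasible planning step, consistent with $\bar U_t=\mathcal F_b(t,\bar X_t)$ and the zero-order-hold discretization~\eqref{eqn:nom_sys:discrete}. I would state this identification explicitly as the link that makes feasibility at $t_{k-1}$ certify safety at $t_k$.
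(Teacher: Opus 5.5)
Your proposal is correct at the paper's level of rigor and follows its proof step for step: \Cref{thm:dr_bound} for the system, the $1-\delta_z$ event from \Cref{lem:obs_bound} together with $\rho_z\ge\rho_z(N,\delta_z)$, the product coupling $\pi_x\otimes\pi_z$ with Minkowski's inequality to obtain $\Wass_{2p}(\mathbb Y_{t_k},\bar{\mathbb Y}_{t_k})\le\rho_x+\rho_z$, and then the DR-CC followed by complementation. Your third step goes beyond the paper, which simply asserts that the planner enforces the DR-CC on $\mathbb B_{2p}(\bar{\mathbb Y}_{t_k},\rho_y)$, so identifying $\bar{\mathbb Y}_{t_k}$ with a planner prediction from an earlier step (e.g.\ $\ydisclaw{1}{k-1}$, which uses feasibility at $t_{k-1}$ rather than at $t_k$) is an assumption the paper also makes implicitly, and you should state it as an assumption rather than as a consequence of the setup, because the feedback input $\bar U_t=\mathcal F_b(t,\bar X_t)$ is not zero-order hold and the planner's laws are conditioned on the re-measured state whereas $\bar{\mathbb X}_{t_k}$ in \Cref{thm:dr_bound} is not.
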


The proof is provided in Appendix~\ref{app:proof_cor_safety}.%
\footnote{\Cref{thm:closed_loop_safety} guarantees constraint satisfaction at each step for which~\eqref{eqn:dr_mpc} is feasible, but does not ensure recursive feasibility. If the DR-MPC problem becomes infeasible at some $t_k$, the guarantee no longer applies. Establishing recursive feasibility would require additional terms, such as a distributionally robust terminal set and controller or an infeasibility handling mechanism, which we leave for future work.}

\begin{remark}
The additive structure of $\rho_y$ arises from worst-case coupling arguments and the triangle inequality for the Wasserstein distance, and ensures robustness under independent uncertainties. Tighter ambiguity sets may be obtained by exploiting additional structure or dependence between $X_t$ and $Z_t$, which we leave for future work.
\end{remark}

\Cref{thm:closed_loop_safety} shows that the \ellone certificate and empirical Wasserstein bounds together yield a stagewise safety guarantee at each feasible planning step. \textbf{The adaptive controller bounds the deviation between true and nominal state distributions, 
while the concentration result bounds the environmental distribution deviation}. Together, these determine the ambiguity radii that ensure that the chance constraint~\eqref{eqn:prob_guar} holds at each step where the planner is feasible.

\section{Tractable Reformulation of the DR-MPC Problem}\label{sec:tract_mpc}

The DR-MPC problem in~\eqref{eqn:dr_mpc} contains DR-CC that are not directly amenable to online computation. In this section, we derive a tractable reformulation that replaces these constraints with equivalent CVaR conditions under the nominal distribution. The key idea is to measure safety through the distance to the collision set and to exploit Wasserstein duality to convert the worst-case probability constraint into a deterministic expectation constraint.

\subsection{CVaR Reformulation of the DR-CC}

To obtain a tractable reformulation of~\eqref{eqn:dr_chance_prop}, we consider a fixed time $t$ and invoke the standard strong duality result for Wasserstein distributionally robust optimization~\cite{gao2023distributionally,zhang2025short}.

\begin{lemma}
Let $(\mathcal Y,d_{\mathcal Y})$ be a Polish space, let $\bar{\mathbb Y}_t\in\mathcal P_{2p}(\mathcal Y)$, and let $\ell:\mathcal Y\to\mathbb R$ be Borel measurable with $\ell\in L^1(\bar{\mathbb Y}_t)$. Then
\[
\sup_{\nu \in\mathbb{B}_{2p}(\bar{\mathbb Y}_t,\rho_y)}
\mathbb E_{\nu}[\ell(Y_t)]
=
\inf_{\gamma\ge0}
\left\{
\gamma\rho_y^{2p}
+
\mathbb E_{\bar Y \sim \bar{\mathbb Y}_t}
\!\left[\bar{\ell}_\gamma(\bar{Y})\right]
\right\},
\]
where $\bar{\ell}_\gamma(\bar y)
\coloneqq
\sup_{y\in\mathcal Y}
\left\{
\ell(y)-\gamma d_{\mathcal Y}(y,\bar y)^{2p}
\right\}$.
\end{lemma}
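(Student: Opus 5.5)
We prove the two inequalities separately: weak duality ($\le$) by a direct coupling argument, and strong duality ($\ge$) by constructing near-optimal measures that transport mass from $\bar{\mathbb Y}_t$ to near-maximizers of the inner supremum. Throughout, write $c(y,\bar y) := d_{\mathcal Y}(y,\bar y)^{2p}$, and note that $\nu\in\mathbb B_{2p}(\bar{\mathbb Y}_t,\rho_y)$ iff some coupling $\pi\in\Pi(\nu,\bar{\mathbb Y}_t)$ has $\int c\,d\pi\le\rho_y^{2p}$; an optimal coupling exists on a Polish space.

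\emph{Weak duality.} Fix $\gamma\ge0$, $\nu$ in the ball, and an optimal coupling $\pi$. For every pair $(y,\bar y)$ the definition of $\bar\ell_\gamma$ gives $\ell(y)\le \bar\ell_\gamma(\bar y)+\gamma c(y,\bar y)$. Integrating against $\pi$ yields
\[
\mathbb E_\nu[\ell]\le \mathbb E_{\bar{\mathbb Y}_t}[\bar\ell_\gamma]+\gamma\rho_y^{2p}.
\]
Taking the supremum over $\nu$ and the infimum over $\gamma$ gives ``$\le$''. Two measurability points need care. First, $\bar\ell_\gamma$ is in general only universally measurable, since it is the supremum of a Borel function over a Polish space; we use the measurable projection theorem and interpret expectations with respect to the completion. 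Second, if $\mathbb E[\bar\ell_\gamma]=+\infty$ for every $\gamma$, the bound holds trivially.

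\emph{Strong duality.} This direction is the main obstacle. Define the concave, nondecreasing value function
\[
v(r):=\sup\bigl\{\mathbb E_\nu[\ell]:\ \exists\,\pi\in\Pi(\nu,\bar{\mathbb Y}_t),\ \textstyle\int c\,d\pi\le r\bigr\}.
\]
Concavity follows by mixing couplings. We then follow the approach of \cite{gao2023distributionally,zhang2025short}:
\begin{enumerate}
\item Relax the problem to all couplings $\pi$ with first marginal $\bar{\mathbb Y}_t$ and transport budget $\int c\,d\pi\le\rho_y^{2p}$, maximizing $\int\ell(y)\,d\pi$.
\item Lagrangify the budget constraint. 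Since $v$ is concave and $\rho_y>0$ lies in the interior of its domain (Slater's condition holds via the trivial coupling, which has cost $0$), standard convex duality gives
\[
v(\rho_y^{2p})=\inf_{\gamma\ge0}\Bigl\{\gamma\rho_y^{2p}+\sup_{\pi}\textstyle\int\bigl(\ell(y)-\gamma c(y,\bar y)\bigr)\,d\pi\Bigr\}.
\]
\item Show that the inner unconstrained supremum over $\pi$ equals $\mathbb E_{\bar{\mathbb Y}_t}[\bar\ell_\gamma]$. This is an interchange of supremum and integral. We would use a measurable selection (Jankov--von Neumann) of $\epsilon$-maximizers $y_\epsilon(\bar y)$, truncated where $\bar\ell_\gamma=\infty$, and push $\bar{\mathbb Y}_t$ forward along $\bar y\mapsto(y_\epsilon(\bar y),\bar y)$.
\end{enumerate}

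The delicate part is reconciling the budget when the maximizers move mass too far. We handle this by mixing the selected plan with the identity coupling, with weights chosen to meet the budget exactly; concavity of $v$ then recovers the dual value as $\epsilon\to0$. The case $\mathbb E[\bar\ell_\gamma]=\infty$ for all $\gamma$ requires showing that $v(\rho_y^{2p})=\infty$. We do this with the same construction, sending a small fraction of mass to points of arbitrarily large $\ell-\gamma c$.
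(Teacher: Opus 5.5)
Your sketch is correct and takes essentially the paper's route: the paper does not prove this lemma but imports it from \cite{gao2023distributionally,zhang2025short}, and your coupling-based weak duality, one-dimensional Lagrangian duality for the concave value function $v$, and measurable-selection interchange of supremum and integral are precisely the short proof of \cite{zhang2025short}. One thing to tidy in step 3 is that the pushed-forward plan can have infinite transport cost, and a constant-weight mixture with the identity coupling does not fix that, so truncate state-dependently instead: move mass only where $c(y_\epsilon(\bar y),\bar y)\le M$ and let $M\to\infty$ via Fatou, using that the integrand is bounded below by $\ell(\bar y)-\epsilon$; with this, the budget-matching construction and the separate $+\infty$ case are unnecessary, because the supergradient of $v$ at $\rho_y^{2p}>0$ already closes the argument (by contraposition, $v(\rho_y^{2p})<\infty$ forces $\mathbb E[\bar\ell_\gamma]<\infty$ for that $\gamma$).
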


We next specialize this dual representation to our setting, where the inner supremum depends only on the distance between the nominal sample $\bar y$ and the collision set $\mathcal C$. This observation yields an equivalent CVaR characterization of the DR-CC.
Specifically, for a random variable $X \sim \mathbb{P}$ and risk level $\beta \in (0,1)$, the CVaR is defined as
\begin{equation}\label{eqn:cvar_def}
\operatorname{CVaR}^\mathbb{P}_\beta(X) = \inf_{\eta \in \mathbb{R}} \left\{ \eta + \frac{1}{\beta} \mathbb{E}_\mathbb{P}[(X - \eta)_+] \right\}.
\end{equation}
Intuitively, $\operatorname{CVaR}_\beta^\mathbb{P}(X)$ measures the expected value of $X$ in the worst $\beta$-fraction of outcomes, and thus provides a risk-sensitive measure that captures tail behavior beyond standard expectation~\cite{rockafellar2002conditional}. While CVaR is commonly used as a conservative outer approximation of chance constraints (e.g.,~\cite{zymler2013distributionally, xie2021distributionally}), the following result shows that, in our setting, it yields an exact reformulation of the DR-CC under the Wasserstein ambiguity set.

\begin{proposition}
\label{prop:cvar_reform}
Let $(\mathcal Y,d_{\mathcal Y})$ denote the metric space used in the definition of the Wasserstein ambiguity set~\eqref{eqn:amb_set}. Then, for any fixed time $t$, the DR-CC~\eqref{eqn:dr_chance_prop} is equivalent to
\begin{equation}
\operatorname{CVaR}_{\beta}^{\bar{\mathbb Y}_t}
\!\left(-d_{\mathcal Y}(\bar{Y}_t,\mathcal C)^{2p}\right)
\le
-\frac{\rho_y^{2p}}{\beta},
\label{eq:cvar_reform}
\end{equation}
where $ \bar{Y}_t \sim \bar{\mathbb Y}_t$ and $d_{\mathcal Y}(y,\mathcal C) \coloneqq \inf_{c\in\mathcal C} d_{\mathcal Y}(y,c)$ denotes the distance from $y$ to the collision set.
\end{proposition}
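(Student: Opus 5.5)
The plan is to apply the strong duality lemma stated just above to the indicator loss $\ell(y) := \mathbf{1}_{\mathcal C}(y)$, compute the resulting inner supremum in closed form, and then rewrite the dual condition as a CVaR condition by a change of variables. Since $\mathcal C$ is closed (each $\mathcal O(z)$ is closed; I will take $\mathcal C$ to be closed, or replace it by its closure, which leaves $d_{\mathcal Y}(\cdot,\mathcal C)$ unchanged), $\ell$ is Borel and bounded, hence in $L^1(\bar{\mathbb Y}_t)$. Moreover $\sup_{\nu}\mathbb P_\nu(Y_t\in\mathcal C)=\sup_\nu \mathbb E_\nu[\ell(Y_t)]$, so the lemma applies.

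\textbf{Step 1 (inner supremum).} For fixed $\bar y$ and $\gamma\ge 0$, I split the supremum in $\bar{\ell}_\gamma(\bar y)$ into $y\notin\mathcal C$ and $y\in\mathcal C$. Taking $y=\bar y$ shows the first part is at least $0$, and it is at most $0$ since $\ell=0$ there and the penalty is nonpositive. The second part is $\sup_{c\in\mathcal C}\{1-\gamma d_{\mathcal Y}(c,\bar y)^{2p}\}=1-\gamma D(\bar y)$, where $D(\bar y):=d_{\mathcal Y}(\bar y,\mathcal C)^{2p}$. Hence
\[
\bar{\ell}_\gamma(\bar y)=\bigl(1-\gamma D(\bar y)\bigr)_+ .
\]
This covers the case $\bar y\in\mathcal C$, where $D=0$ and the value is $1$. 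It follows that
\[
\sup_{\nu}\mathbb P_\nu(Y_t\in\mathcal C)=\inf_{\gamma\ge 0} g(\gamma),\qquad g(\gamma):=\gamma\rho_y^{2p}+\mathbb E\bigl[(1-\gamma D(\bar Y_t))_+\bigr].
\]

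\textbf{Step 2 (change of variables).} We have $g(0)=1>\beta$. Also $g$ is convex and continuous on $[0,\infty)$, since the integrand is bounded by $1$ and dominated convergence applies. Finally, $g(\gamma)\ge\gamma\rho_y^{2p}\to\infty$ because $\rho_y>0$. Therefore the infimum is attained at some $\gamma^\star$, and the DR-CC holds iff there exists $\gamma>0$ with $g(\gamma)\le\beta$.

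Dividing by $\gamma\beta>0$ and setting $\eta:=-1/\gamma<0$, the inequality $g(\gamma)\le\beta$ becomes
\[
\eta+\tfrac1\beta\,\mathbb E\bigl[(-D(\bar Y_t)-\eta)_+\bigr]\le-\tfrac{\rho_y^{2p}}{\beta}.
\]
The map $\gamma\mapsto -1/\gamma$ is a bijection between $(0,\infty)$ and $(-\infty,0)$. So the DR-CC is equivalent to the existence of some $\eta<0$ satisfying this inequality.

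\textbf{Step 3 (matching the CVaR definition).} Let $h(\eta)$ denote the CVaR objective in \eqref{eqn:cvar_def} applied to $X=-D(\bar Y_t)$. For $\eta\ge 0$ we have $-D-\eta\le 0$, so $h(\eta)=\eta\ge 0>-\rho_y^{2p}/\beta$. Such $\eta$ can therefore never certify the inequality, and restricting to $\eta<0$ loses nothing.

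It remains to show that $\inf_\eta h(\eta)\le-\rho_y^{2p}/\beta$ is equivalent to the existence of some $\eta$ with $h(\eta)\le-\rho_y^{2p}/\beta$. This holds because $h$ is convex and finite ($X$ is bounded above by $0$ and $X\in L^1$ since $\bar{\mathbb Y}_t\in\mathcal P_{2p}$). It is also coercive, since $h(\eta)\to\infty$ as $\eta\to\infty$ and $h(\eta)\ge\eta(1-1/\beta)+\text{const}\to\infty$ as $\eta\to-\infty$. Hence its infimum is attained at the $\beta$-quantile. Combining this with Step 2 yields \eqref{eq:cvar_reform}.

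The main obstacle is the two-sided handling of non-strict inequalities: the dual infimum and the CVaR infimum must be shown to be attained, which I handle by convexity and coercivity as above, using $\rho_y>0$. A secondary point is the closed form of $\bar{\ell}_\gamma$: it relies only on $\mathcal C$ being closed, and integrability of $D$ is not needed there because the positive part is bounded.
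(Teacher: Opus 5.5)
Your proposal is correct and follows the same route as the paper's proof. Both apply the duality lemma to $\mathbf 1_{\mathcal C}$, obtain $\bar\ell_\gamma(\bar y)=\bigl(1-\gamma d_{\mathcal Y}(\bar y,\mathcal C)^{2p}\bigr)_+$, exclude $\gamma=0$ because $\beta<1$, and substitute $\eta=-1/\gamma$ to reach the CVaR definition. Your additional arguments fill in a step the paper skips when it says the infimum over $\eta\in\mathbb R$ is ``equivalent'' to the infimum over $\gamma>0$: you show both infima are attained (by convexity and coercivity, using $\rho_y>0$), and you note that any $\eta\ge 0$ gives objective value $\eta\ge 0>-\rho_y^{2p}/\beta$.
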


The proof is provided in Appendix~\ref{app:proof_cvar_reform}.
Intuitively, condition~\eqref{eq:cvar_reform} requires that the $\beta$-tail expectation of the negative clearance to the collision set remains below zero under the nominal distribution. Thus, the probability of entering the collision set cannot exceed $\beta$ for any distribution within the Wasserstein ambiguity set. Additionally, the reformulation~\eqref{eq:cvar_reform} depends only on the distance to the unsafe set $\mathcal{C}$, and is therefore not restricted to a particular obstacle geometry.

\subsection{Sample-based DR-MPC Problem}
% \todo[inline]{Update discrete distributions}
Since the nominal system dynamics are linear and the process noise is Gaussian,
the predicted state distribution $\xdisclaw{j}{k}$ in the DR-MPC problem~\eqref{eqn:dr_mpc} is Gaussian with mean
$\mu_{j|k}$ and covariance $\Sigma_{j|k}$ obtained from the standard moment propagation equations.
Using~\Cref{prop:cvar_reform}, the DR-MPC problem~\eqref{eqn:dr_mpc} at discrete time $k$ can be reformulated as
\begin{subequations}\label{eqn:sample_mpc}
\begin{align}
\min_{\bar{\mathbf U}_k}
& \sum_{j=0}^{K-1} \ell_j(\mu_{j|k}, \udisc_{j|k}) + \ell_N(\mu_{K|k})\\
\text{s.t.}\;
& \mu_{j+1|k}
=
A_d \mu_{j|k} + B_d\udisc_{j|k},
\qquad j=0,\dots,K-1,
\\
& \udisc_{j|k} \in \mathcal U,
\qquad j=0,\dots,K-1,
\\
& \mu_{j|k} \in \mathcal X,
\qquad j=1,\dots,K,
\\
& \mu_{0|k}=\bm{x},
\\
& \operatorname{CVaR}_{\beta}^{ \ydisclaw{j}{k}}\!
\left(-d_{\mathcal Y}(\ydisc_{j|k},\mathcal C)^{2p}\right)  \le 
-\frac{\rho_y^{2p}}{\beta}, \qquad j=1,\dots,K.
\label{eqn:drmpc_1_cvar}
\end{align}
\end{subequations}

Despite this reformulation, problem~\eqref{eqn:sample_mpc} remains intractable due to the expectation in the CVaR constraint~\eqref{eqn:drmpc_1_cvar}, which is taken with respect to the nominal lifted distribution $\ydisclaw{j}{k}$. To address this issue, we approximate the CVaR constraint using Monte Carlo sampling. 
Recall that the discrete nominal lifted distribution factorizes as $\ydisclaw{j}{k}  = \xdisclaw{j}{k} \otimes \zdisclaw{j}{k}$.
Thus, samples of the lifted variable can be generated by independently sampling the nominal system and environment distributions. In particular, samples of the predicted state are generated as $\hat x_{j|k}^{(s)} \sim \mathcal N(\mu_{j|k},\Sigma_{j|k}), \,s=1,\dots,N$, while samples $\hat z_{j|k}^{(s)}$ are drawn from the known nominal environment distribution $\zdisclaw{j}{k}$. The resulting lifted samples are
$\hat y_{j|k}^{(s)} = \big(\hat x_{j|k}^{(s)}, \hat z_{j|k}^{(s)} \big), \, s=1,\dots,N$.

Using these samples, the CVaR term in~\eqref{eqn:drmpc_1_cvar} is approximated via its sample-average representation
\begin{equation*}\label{eqn:cvar_saa}
\operatorname{CVaR}_{\beta}^{ \ydisclaw{j}{k} }
\left(-d_{\mathcal Y}(\bar y_{j|k},\mathcal C)^{2p}\right)\approx
\inf_{\eta\in\mathbb R}
\Bigg[
\eta + \frac{1}{\beta N}\sum_{s=1}^N
\big(-d_{\mathcal Y}(\hat y_{j|k}^{(s)},\mathcal C)^{2p}-\eta\big)_+
\Bigg].
\end{equation*}

The resulting formulation replaces the infinite-dimensional DR-CC with a finite set of constraints based on sampled nominal realizations of the environment states. As the number of samples $N$ increases, the sample-average approximation converges to the true CVaR constraint.\footnote{For several common environmental geometries, the distance $d_{\mathcal Y}((x,z),\mathcal C)$ admits closed-form expressions, yielding tractable constraints. In particular, circular and polytopic unsafe sets lead to efficiently computable formulations; see Appendix~\ref{app:obstacles}.}
The resulting \frameworkname \ framework is summarized in \Cref{alg:dr_l1}.
% In the next section, we summarize the resulting distributionally robust planning and control architecture in the form of an implementable algorithm.

% \subsection{Algorithm}

% The overall planning and control architecture is summarized in \Cref{alg:dr_l1}.
% The proposed framework operates on two time scales. The high-level DR-MPC planner is updated every $\Delta t$, while the low-level $\mathcal L_1$ adaptive controller runs at the faster sampling period $T_s$, typically with $T_s \ll \Delta t$. At each planning step, the planner computes a nominal trajectory by solving the sample-based DR-MPC problem~\eqref{eqn:sample_mpc}, where the CVaR
% constraint is approximated using the sample-average formulation~\eqref{eqn:cvar_saa}. The first control input of this trajectory is then tracked over the interval $[t_k,t_{k+1})$ using zero-order hold and augmented with the adaptive compensation term.

\begin{algorithm}[t]
\caption{\frameworkname}
\label{alg:dr_l1}
\begin{algorithmic}[1]
\State \textbf{Input:} horizon $K$, samples $N$, ambiguity radii $\rho_x, \rho_z$, planning period $\Delta t$, adaptation period $T_s$, filter bandwidth $\omega$, adaptation gain $\lambda_s$, control gain $K_b$
\For{$k=0,1,2,\dots$}
    \State $t_k \gets k\Delta t$; measure $X_{t_k}$
    \State Sample $\{\hat z_{j|k}^{(s)}\}_{j=1:K}^{s=1:N}$ from $\zdisclaw{j}{k}$
    \State Solve DR-MPC~\eqref{eqn:sample_mpc} with $\bm{x}=X_{t_k}$ to obtain $(\mathbf{\udisc}_k^\star(\bm{x}), \mathbf{\xdisc}_k^\star(\bm{x}))$
    \State Set $\Breve{u}_t=\udisc_{0|k}^\star(\bm{x})$, $\;\Breve{x}_t=\xdisc_{0|k}^\star(\bm{x})$  
    \For{$t=t_k: T_s: t_{k+1}-T_s$}
        \State Measure $X_t$
        \State $U_t \gets \mathcal{F}_b(t,X_t) + U_{\mathcal L_1,t}$
        \State Update $\mathcal L_1$ predictor, adaptation, and filter
    \EndFor
\EndFor
\end{algorithmic}
\end{algorithm}

\section{Experiments}\label{sec:exp}
We evaluate the proposed framework through simulations of trajectory tracking with dynamic obstacle avoidance, where the true system is subject to nonlinear drift and diffusion uncertainties
% that are not captured by the planning model. 
We compare three methods: \emph{DRP-Baseline (Nominal)}, which plans for the uncertainty-free nominal system operating in an uncertain environment and applies the baseline controller; \emph{DRP-Baseline (True Dynamics)}, which applies the same controller directly to the uncertain system without the \ellone augmentation; and \emph{\frameworkname \ (Proposed)}, which augments the planner with an \ellone-adaptive input to compensate for model uncertainties.\footnote{Additional experimental results, videos, and code are available at the \href{https://github.com/astghikhakobyan/DRP-L1AC}{project's codebase.}}

\subsection{Experiment Setup}
We consider a robot operating in a 2D environment with state $x_t = [p_{x,t}, p_{y,t}, v_{x,t}, v_{y,t}]^\top$, representing planar position and velocity, and control input $u_t = [a_{x,t}, a_{y,t}]^\top$, representing accelerations in the $x$- and $y$-directions. 
The nominal dynamics are modeled as a stochastic double integrator, while the true system includes nonlinear drift and diffusion uncertainties. 
The environment consists of multiple moving obstacles, each following a constant-velocity profile perturbed by additive Gaussian noise (standard deviation $0.5$).\footnote{The extension of \frameworkname \ to multiple environmental components (e.g., obstacles) is readily obtained by introducing variables $\{Z_t^{(i)}\}_{i=1}^{n_o}$, where $n_o$ denotes the number of components, and enforcing one DR-CC of the form \eqref{eqn:dr_chance_prop} for each component.}

The DR-MPC planner tracks a desired trajectory with prediction horizon $K=25$, risk level $\beta=0.05$, and $N=40$ samples for the CVaR approximation. The \ellone-adaptive controller uses $\omega=50$, $T_s=10^{-3}$, and $\lambda_s=100$. We evaluate four reference trajectories (\emph{figure-eight}, \emph{circle}, \emph{Lissajous}, and \emph{spiral}) over 100 Monte Carlo runs. The ambiguity radius for each case is shown in~\Cref{tab:controller_results}, selected using the certified bound of \Cref{thm:dr_bound} and~\Cref{lem:obs_bound} as an initial design point and then tuned to balance conservatism and empirical performance.
% \footnote{The certified construction yields the conservative radius $\rho_x + \rho_z$, which we use as an initial design point. Therefore, we employ a smaller tuned radius $\rho_y$ to balance robustness and empirical performance.}

\subsection{Results}
\paragraph{Quantitative Results.}
\Cref{tab:controller_results} summarizes the tracking cost (for successful runs), Wasserstein distance, and failure probability across 100 Monte Carlo runs, where failure includes collisions, loss of feasibility, or numerical instability (e.g., state divergence).

While the distributionally robust planner accounts for environmental uncertainty, it does not compensate for the system's endogenous uncertainties. 
As a result, applying the baseline controller directly to the true system leads to severe degradation in closed-loop performance, including large increases in tracking cost and high failure rates (up to $86\%$ in the circle scenario), indicating a lack of robustness to unmodeled dynamics.
In contrast, augmenting DR-MPC with the \ellone-adaptive controller significantly improves reliability. 
The proposed method eliminates failures in all four scenarios, while maintaining tracking performance.

% Importantly, the average empirical Wasserstein distance $\Wass_{\mathrm{mean}}$ between the true and nominal state distributions remains consistently below the certified bound, indicating consistency with the theoretical distributional certificate. This suggests that the true system evolution remains close to the nominal distribution used for planning, supporting the validity of the DR-CCs at each feasible planning step.

\begin{table}[t]
\centering
\caption{Performance comparison over 100 runs (mean $\pm$ std).}
\label{tab:controller_results}
\begin{tabular}{llcccc}
\toprule
\textbf{Reference} & \textbf{Method} & \textbf{$J_{\text{track}}$} &  \textbf{$P_{\text{fail}}(\%)$} & \textbf{$\rho_{\text{emp}}$}  \\
\midrule
\multirow{3}{*}{Figure-eight}
 & DRP-Baseline (Nom.) 
 & $7554.79 \pm 1228.75$ 
 & $0$ 
 & \multirow{3}{*}{$0.04$}\\
 & DRP-Baseline (True Dyn.) 
 & $12102.00 \pm 6865.03$ 
 & $24$ &\\
 & \frameworkname \ (Proposed)
 & $8198.77 \pm 1690.13$ 
 & $0$ &\\

\midrule
\multirow{3}{*}{Circle}
 & DRP-Baseline (Nom.) 
 & $1883.42 \pm 235.43$  
 & $0$ 
 & \multirow{3}{*}{$0.02$}\\
 & DRP-Baseline (True Dyn.) 
 & $3.68{\times}10^{5} \pm 3.65{\times}10^{5}$ 
 & $86$ &\\
 & \frameworkname \ (Proposed)
 & $2019.25 \pm 246.64$ 
 & $0$ &\\

\midrule
\multirow{3}{*}{Lissajous}
 & DRP-Baseline (Nom.) 
 & $566.76 \pm 93.62$ 
 & $0$ 
 & \multirow{3}{*}{$0.04$}\\
 & DRP-Baseline (True Dyn.) 
 & $3090.88 \pm 2091.68$ 
 & $8$ &\\
 & \frameworkname 
 & $681.49 \pm 79.25$ 
 & $0$ &\\

\midrule
\multirow{3}{*}{Spiral}
 & DRP-Baseline (Nom.) 
 & $5657.09 \pm 3152.24$ 
 & $2$ 
 & \multirow{3}{*}{$0.04$}\\
 & DRP-Baseline (True Dyn.) 
 & $4248.69 \pm 1776.33$ 
 & $44$ &\\
 & \frameworkname \ (Proposed)
 & $6236.01 \pm 5214.38$ 
 & $1$ &\\

\bottomrule
\end{tabular}
\end{table}

\paragraph{Qualitative Results.} 
Representative trajectories for the figure-eight scenario are illustrated~\Cref{fig:figure8}. Although all methods rely on the same DR-MPC planner, their closed-loop behavior differs significantly under system's epistemic uncertainties. 
At $t = 1.58\,\text{s}$, the true system without adaptive augmentation begins to deviate noticeably from the reference trajectory due to unmodeled dynamics. 
Furthermore, at $t=5.30 \, \text{s}$, the system fails to anticipate the uncertainty in the cluttered region, leading to a collision with the polygon.

By contrast, the proposed \frameworkname \ approach closely tracks the nominal trajectory throughout execution. This behavior is consistent with the theoretical guarantees, which ensure that the adaptive controller keeps the true state distribution close to the nominal one. 

Overall, the results highlight the complementary roles of the two components: DR-MPC provides robustness to environmental uncertainty at the planning stage, while the \ellone-adaptive controller compensates for model uncertainty online, ensuring reliable closed-loop performance and maintaining the validity of the distributional safety guarantees at each feasible planning step.

\section{Conclusions and Future Work}\label{sec:conc}

This paper proposed a hierarchical framework for safe control under model and environmental uncertainties by integrating \ellone-AC with DR-MPC. The \ellone controller provides online certificates that bound the deviation between nominal and true state distributions, enabling the construction of ambiguity sets for DR-MPC. We showed that the feasibility of the resulting planning problem implies stagewise safety of the true system. Numerical results demonstrated effective performance under simultaneous uncertainties.

Future work will consider systems with nominal nonlinear dynamics with unmatched uncertainties, as well as the recursive feasibility of DR-MPC.
A natural extension is the integration of data-driven components (e.g., learned dynamics, controllers, and perception). 
The distributional structure of the proposed framework enables its principled incorporation through compatible guarantees (e.g., empirical risk on training distributions).

\begin{figure*}
    \centering
    \includegraphics[width=\linewidth]{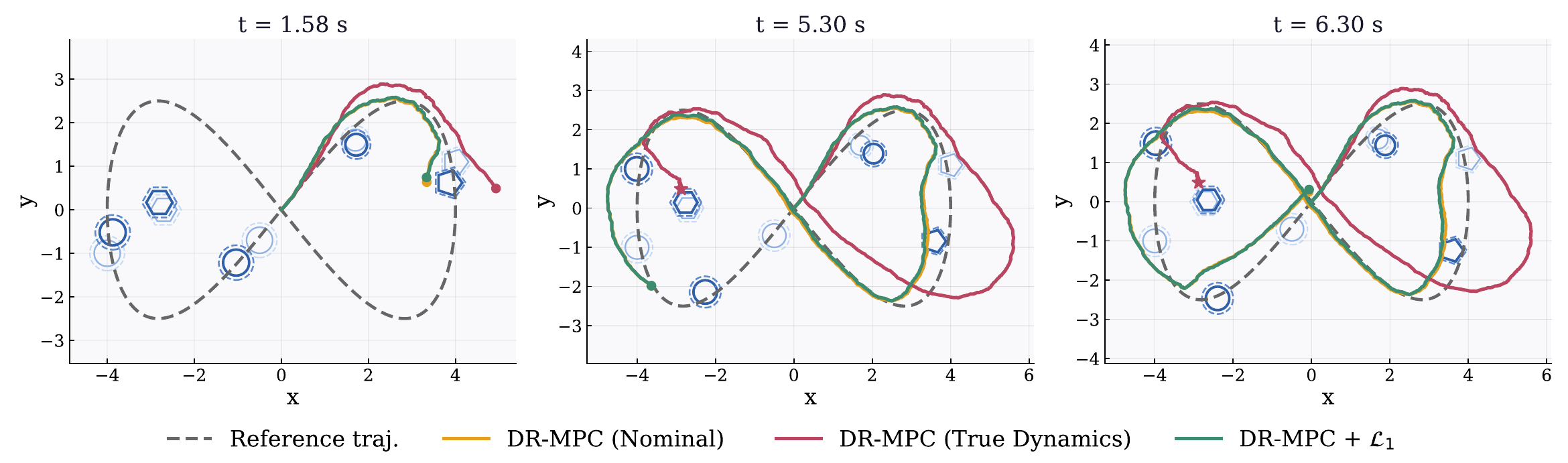}
    \caption{Closed-loop trajectories for the figure-eight scenario at different times. Without adaptation, the controller deviates from the reference due to model mismatch, whereas the \ellone-augmented controller closely tracks the nominal trajectory and maintains safe obstacle avoidance.}
    \label{fig:figure8}
\end{figure*}

%%%%%%%%%%%%%%%%%%%%%%%%%%%%%%%%%%%%%%%%%%%%%%%%%%%%%%%%%%%%%%%%%%%%%%%%%%%%%%%%
\appendix

\section{Proof of~\Cref{thm:closed_loop_safety}}
\label{app:proof_cor_safety}
\begin{proof}
Fix a sampling time $t_k$. Define the lifted random variable
$Y_{t_k} := (X_{t_k},Z_{t_k})$, with law $\mathbb Y_{t_k} = \mathcal L(Y_{t_k})$, and let $\bar Y_{t_k} := (\bar X_{t_k},\bar Z_{t_k})$
denote its nominal counterpart with law $\bar{\mathbb Y}_{t_k} = \bar{\mathbb X}_{t_k}\otimes \bar{\mathbb Z}_{t_k}$.

By \Cref{thm:dr_bound}, we have 
\[
\Wass_{2p}(\mathbb X_{t_k},\bar{\mathbb X}_{t_k}) \le \rho_x.
\]
Moreover, since $\rho_z \ge \rho_z(N,\delta_z)$, \Cref{lem:obs_bound} implies that 
\[
\Wass_{2p}(\mathbb Z_{t_k},\bar{\mathbb Z}_{t_k}) \le \rho_z
\]
holds with probability at least $1-\delta_z$ with respect to the environmental data.

Conditioned on this event, let $\pi_x\in\Pi(\mathbb X_{t_k},\bar{\mathbb X}_{t_k})$ and $\pi_z\in\Pi(\mathbb Z_{t_k},\bar{\mathbb Z}_{t_k})$ be couplings.
Since the system and environmental states are independent, the joint distributions factorize as $\mathbb Y_{t_k} = \mathbb X_{t_k}\otimes \mathbb Z_{t_k}$ and $\bar{\mathbb Y}_{t_k} = \bar{\mathbb X}_{t_k}\otimes \bar{\mathbb Z}_{t_k}$. Therefore, the product measure $\pi := \pi_x \otimes \pi_z$ is a coupling between $\mathbb Y_{t_k}$ and $\bar{\mathbb Y}_{t_k}$.
By definition of the Wasserstein distance,
\[
\Wass_{2p}(\mathbb Y_{t_k},\bar{\mathbb Y}_{t_k})
\le
\left(
\int \|(x,z)-(\bar x,\bar z)\|^{2p}\,d\pi
\right)^{1/(2p)}.
\]
Using the triangle inequality and Minkowski's inequality, we have
\begin{align*}
 \Wass_{2p}(\mathbb Y_{t_k},\bar{\mathbb Y}_{t_k})
& \le
\left(
\int (\|x-\bar x\|+\|z-\bar z\|)^{2p}\,d\pi
\right)^{1/(2p)} \\
& \le
\Wass_{2p}(\mathbb X_{t_k},\bar{\mathbb X}_{t_k})
+
\Wass_{2p}(\mathbb Z_{t_k},\bar{\mathbb Z}_{t_k}) \\
&\le \rho_x + \rho_z.
\end{align*}
Hence, $\mathbb Y_{t_k} \in \mathbb B_{2p}\bigl(\bar{\mathbb Y}_{t_k},\,\rho_x+\rho_z\bigr)$ with probability at least $1-\delta_z$ with respect to the environmental data.

Since the DR-MPC planner enforces the DR-CC~\eqref{eqn:drmpc_safe} for all distributions in the ambiguity set $\mathbb B_{2p}(\bar{\mathbb Y}_{t_k},\rho_y)$ with $\rho_y=\rho_x+\rho_z$, it follows that 
\[
\mathbb P(Y_{t_k}\in \mathcal C)\le \beta
\]
with probability at least $1-\delta_z$.
Finally, since 
\[
Y_{t_k}\in \mathcal C
\Longleftrightarrow
X_{t_k} \in \mathcal O(Z_{t_k}),
\]
we conclude that $\mathbb P\!\left(
X_{t_k} \notin \mathcal O(Z_{t_k})
\right)\ge 1-\beta$, which completes the proof.
\end{proof}

\section{Proof of~\Cref{prop:cvar_reform}}
\label{app:proof_cvar_reform}

\begin{proof}
Let $\ell(y)=\indicator{\mathcal C}{y}$. Since $\mathcal C$ is measurable, the function $\ell$ is Borel measurable, and since $0\le \ell\le 1$, we also have $\ell\in L^1(\bar{\mathbb Y}_t)$. Therefore, the strong duality lemma applies and yields
\[
\sup_{\nu \in\mathbb B_{2p}(\bar{\mathbb Y}_t,\rho_y)}
\mathbb P_{\nu}(Y_t\in\mathcal C)
=
\inf_{\gamma\ge 0}
\left\{
\gamma\rho_y^{2p}
+
\mathbb E_{\bar Y_t\sim\bar{\mathbb Y}_t}
\big[\bar\ell_\gamma(\bar Y_t)\big]
\right\},
\]
where $\bar\ell_\gamma(\bar y)
=
\sup_{y\in\mathcal Y}
\left\{
\mathbf 1_{\mathcal C}(y)-\gamma d_{\mathcal Y}(y,\bar y)^{2p}
\right\}$.

Now fix $\bar y\in\mathcal Y$. If $y\in\mathcal C$, the inner expression equals
$1-\gamma d_{\mathcal Y}(y,\bar y)^{2p}$, whose supremum over $y\in\mathcal C$ is
$1-\gamma d_{\mathcal Y}(\bar y,\mathcal C)^{2p}$.
If $y\notin\mathcal C$, the indicator term vanishes, and the expression is at most $0$. Hence
\[
\bar\ell_\gamma(\bar y)
=
\bigl(1-\gamma d_{\mathcal Y}(\bar y,\mathcal C)^{2p}\bigr)_+.
\]

Substituting this into the dual formulation gives
\[
\sup_{\nu\in\mathbb B_{2p}(\bar{\mathbb Y}_t,\rho_y)}
\mathbb P_{\nu}(Y_t\in\mathcal C) =
\inf_{\gamma\ge 0}
\left\{
\gamma\rho_y^{2p}
+
\mathbb E_{\bar Y_t\sim\bar{\mathbb Y}_t}
\left[
\bigl(
1-\gamma d_{\mathcal Y}(\bar Y_t,\mathcal C)^{2p}
\bigr)_+
\right]
\right\}.
\]
Therefore, the DR-CC~\eqref{eqn:dr_chance_prop} holds if and only if
\begin{equation}
\inf_{\gamma\ge 0}
\left\{
\gamma\rho_y^{2p}
+
\mathbb E_{\bar Y_t\sim\bar{\mathbb Y}_t}
\left[
\bigl(
1-\gamma d_{\mathcal Y}(\bar Y_t,\mathcal C)^{2p}
\bigr)_+
\right]
\right\}
\le \beta .
\label{eq:proof_aux_dual}
\end{equation}

Since $\beta\in(0,1)$, the optimizer in~\eqref{eq:proof_aux_dual} cannot occur at $\gamma=0$, because the objective value at $\gamma=0$ is $1>\beta$. Thus, we may restrict attention to $\gamma>0$. Introduce the change of variables $\eta=-1/\gamma$. Then
\[
\bigl(
-d_{\mathcal Y}(\bar Y_t,\mathcal C)^{2p}-\eta
\bigr)_+
=
\frac{1}{\gamma}
\bigl(
1-\gamma d_{\mathcal Y}(\bar Y_t,\mathcal C)^{2p}
\bigr)_+.
\]
Therefore,
\[
\eta
+
\frac{1}{\beta}
\mathbb E_{\bar Y_t\sim\bar{\mathbb Y}_t}
\left[
\bigl(
-d_{\mathcal Y}(\bar Y_t,\mathcal C)^{2p}-\eta
\bigr)_+
\right]
\le
-\frac{\rho_y^{2p}}{\beta},
\]
if and only if 
\[
\gamma\rho_y^{2p}
+
\mathbb E_{\bar Y_t\sim\bar{\mathbb Y}_t}
\left[
\bigl(
1-\gamma d_{\mathcal Y}(\bar Y_t,\mathcal C)^{2p}
\bigr)_+
\right]
\le \beta.
\]

Using the standard representation~\eqref{eqn:cvar_def} and taking the infimum over $\eta\in\mathbb{R}$ (equivalently, over $\gamma>0$) yields~\eqref{eq:cvar_reform}.
\end{proof}

\section{Tractable Distance Expressions for Translated Unsafe Regions}
\label{app:obstacles}

In this appendix, we derive tractable expressions for the distance 
$d_{\mathcal Y}((x,z),\mathcal C)$ appearing in the CVaR reformulation of the distributionally robust chance constraint, when the environment induces unsafe regions via translations of a fixed shape.

We consider the case where the environment variable $z\in\mathbb{R}^{n_z}$ represents the position of the obstacle, while the system state $x\in\mathbb{R}^n$ may include additional components (e.g., velocity). Let $C \in \mathbb{R}^{n_z \times n}$ be a matrix that extracts the position components of the state, so that $Cx$ represents the position of the system.

\begin{lemma}
\label{lem:lifted_distance_translated}
Let $(\mathcal{Y}, d_{\mathcal Y})$ denote the lifted state space equipped with the Euclidean metric 
\[
d_{\mathcal Y}((x,z),(x',z')) = \|(x,z) - (x',z')\|.
\]
Let $S \subset \mathbb{R}^{n_z}$ be a nonempty closed set and define the translated unsafe region
\[
\mathcal O(z) = \{x \in \mathbb{R}^n : Cx - z \in S\}, \qquad z \in \mathbb{R}^{n_z}.
\]
The corresponding lifted unsafe set is 
\[
\mathcal C
=
\{(x,z)\in\mathbb{R}^n\times\mathbb{R}^{n_z} : Cx - z \in S\}.
\]
Then, for any $(x,z)\in\mathcal Y = \mathbb{R}^n\times\mathbb{R}^{n_z}$,
\[
d_{\mathcal Y}\big((x,z),\mathcal C\big)
=
\frac{1}{\sqrt 2}\operatorname{dist}(Cx-z,S),
\]
where $\operatorname{dist}(y,S) := \inf_{\tilde y \in S}\|y-\tilde y\|$.
\end{lemma}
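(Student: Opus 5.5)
The plan is to compute $d_{\mathcal Y}((x,z),\mathcal C)$ by splitting the minimization over $\mathcal C$ into two stages. In the outer stage we fix the offset $s:=Cx'-z'\in S$. In the inner stage we minimize the Euclidean displacement $\|(x,z)-(x',z')\|$ over all $(x',z')$ with $Cx'-z'=s$. Concretely,
\[
d_{\mathcal Y}\big((x,z),\mathcal C\big)^2=\inf_{s\in S}\;\inf_{(x',z'):\,Cx'-z'=s}\big(\|x-x'\|^2+\|z-z'\|^2\big).
\]
Writing $(\delta_x,\delta_z):=(x'-x,z'-z)$ and $M:=[\,C\;\;-\mathbb I_{n_z}\,]\in\mathbb R^{n_z\times(n+n_z)}$, the constraint becomes the affine condition $M(\delta_x,\delta_z)=s-(Cx-z)=:s-r$. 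The inner problem is therefore a minimum-norm solution of an underdetermined linear system.

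For the inner problem, $M$ has full row rank because of the $-\mathbb I$ block, so the minimizer is $M^\top(MM^\top)^{-1}(s-r)$. Its squared norm is $(s-r)^\top(MM^\top)^{-1}(s-r)$, and $MM^\top=CC^\top+\mathbb I_{n_z}$. The key structural fact I will use is that $C$ "extracts the position components" of $x$: its rows are distinct standard basis vectors, so $CC^\top=\mathbb I_{n_z}$. Hence $MM^\top=2\mathbb I_{n_z}$ and the inner infimum equals $\tfrac12\|s-r\|^2$. I will state this interpretation of $C$ explicitly, since the constant $1/\sqrt2$ depends on it; for a general $C$ the same argument yields the weighted distance induced by $(CC^\top+\mathbb I)^{-1}$.

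For the outer problem, taking the infimum over $s\in S$ gives
\[
d_{\mathcal Y}^2=\tfrac12\inf_{s\in S}\|s-(Cx-z)\|^2=\tfrac12\operatorname{dist}(Cx-z,S)^2 .
\]
Taking square roots yields the claim. The interchange of the two infima is legitimate because every point of $\mathcal C$ corresponds to exactly one $s\in S$, namely its own offset $Cx'-z'$. Closedness of $S$ is not needed for the value of the infimum; it only guarantees attainment.

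To keep the argument self-contained I will also give a direct two-sided check. For the lower bound, take any $(x',z')\in\mathcal C$ with offset $s$. By Cauchy--Schwarz and $\|C\|\le1$,
\[
\|s-r\|=\|C(x'-x)-(z'-z)\|\le\|x'-x\|+\|z'-z\|\le\sqrt2\,\|(x'-x,z'-z)\|,
\]
which gives $d_{\mathcal Y}\ge\operatorname{dist}(Cx-z,S)/\sqrt2$. For the upper bound, the explicit choice $\delta_x=\tfrac12C^\top(s-r)$, $\delta_z=-\tfrac12(s-r)$ places the perturbed point in $\mathcal C$ and has norm exactly $\|s-r\|/\sqrt2$. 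The main point needing care is the $CC^\top=\mathbb I$ identity, used in the upper bound; the remaining steps are routine.
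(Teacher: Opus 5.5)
Your proposal is correct and follows the same route as the paper's proof: fix the offset $s = Cx'-z' \in S$, solve the minimum-norm problem under the linear constraint $C\delta_x - \delta_z = s-(Cx-z)$, and then take the infimum over $S$. Your explicit value $(s-r)^\top(CC^\top+\mathbb{I}_{n_z})^{-1}(s-r)$, with $r = Cx-z$, makes the paper's informal ``balancing $a$ and $b$'' step rigorous, and you rightly point out that the constant $1/\sqrt{2}$ holds only when $CC^\top=\mathbb{I}_{n_z}$, i.e., when $C$ is a selection matrix, which the paper's proof uses without saying so.
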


\begin{proof}
By definition,
\[
d_{\mathcal Y}\big((x,z),\mathcal C\big)
=
\inf_{\substack{(x',z')\\ Cx'-z'\in S}}
\sqrt{\|x-x'\|^2+\|z-z'\|^2}.
\]

Let $y := Cx - z$ and $\tilde y := Cx' - z'$. Then
$\tilde y \in S$ and
\[
y - \tilde y = C(x-x') - (z-z').
\]
Define $a := x - x'$ and $b := z - z'$. Then $C a - b = y - \tilde y$.
Fix $d := y - \tilde y$. The inner problem becomes
\[
\inf_{\substack{a\in\mathbb{R}^n,\; b\in\mathbb{R}^{n_z}\\ C a - b = d}}
\sqrt{\|a\|^2 + \|b\|^2}.
\]

The optimal solution is obtained by balancing $a$ and $b$, yielding the minimum value $\|d\|/\sqrt{2}$.

Therefore,
\[
d_{\mathcal Y}\big((x,z),\mathcal C\big)
=
\frac{1}{\sqrt 2}\inf_{\tilde y\in S}\|y-\tilde y\|
=
\frac{1}{\sqrt 2}\operatorname{dist}(Cx-z,S).
\]
\end{proof}

\medskip

This result highlights that the lifted distance depends only on the relative position $Cx - z$, reducing the problem to computing a standard Euclidean distance in $\mathbb{R}^{n_z}$ scaled by $1/\sqrt{2}$.

We now specialize this result to common obstacle geometries.

\subsection{Circular Unsafe Region}
\label{app:obstacles_circular}

Consider a circular unsafe region with uncertain center $z \in \mathbb{R}^{n_z}$ and radius $r>0$, defined by
\[
\mathcal O(z)
=
\{x \in \mathbb{R}^n : \|Cx - z\| \le r\}.
\]
This corresponds to the base set $S = \{y\in\mathbb{R}^{n_z} : \|y\|\le r\}$.
By \Cref{lem:lifted_distance_translated},
\[
d_{\mathcal Y}\big((x,z),\mathcal C\big)
=
\frac{1}{\sqrt 2}\left(\|Cx - z\| - r\right)_+.
\]

Thus, for sampled lifted variables,
\[
d_{\mathcal Y}\bigl(\hat y_{j|k}^{(s)},\mathcal C\bigr)
=
\frac{1}{\sqrt 2}
\left(
\bigl\|C\hat x_{j|k}^{(s)}-\hat z_{j|k}^{(s)}\bigr\| - r
\right)_+.
\]

\subsection{Polytopic Unsafe Region}
\label{app:obstacles_polytope}

Consider a translated polytopic unsafe region with uncertain position $z \in \mathbb{R}^{n_z}$ and fixed shape $P := \{y \in \mathbb{R}^{n_z} : A y \le b \}$,
where $A \in \mathbb{R}^{m\times n_z}$ and $b \in \mathbb{R}^m$. The unsafe region is given by
\[
\mathcal O(z)=\{x \in \mathbb{R}^n : Cx - z \in P\}.
\]

Applying \Cref{lem:lifted_distance_translated} yields
\[
d_{\mathcal Y}\big((x,z),\mathcal C\big)
=
\frac{1}{\sqrt 2}\operatorname{dist}(Cx-z,P).
\]
Equivalently,
\[
d_{\mathcal Y}\big((x,z),\mathcal C\big)
=
\frac{1}{\sqrt 2}
\min_{\substack{\tilde y\in\mathbb{R}^{n_z}\\ A\tilde y\le b}}
\|Cx - z - \tilde y\|.
\]

Thus, for sampled lifted variables,
\[
d_{\mathcal Y}\big(\hat y_{j|k}^{(s)},\mathcal C\big)
=
\frac{1}{\sqrt 2}
\min_{\substack{\tilde y\in\mathbb{R}^{n_z}\\ A\tilde y\le b}}
\bigl\|
C\hat x_{j|k}^{(s)}-\hat z_{j|k}^{(s)}-\tilde y
\bigr\|.
\]

In both cases, the distance term admits a convex representation and can be incorporated into the sample-based CVaR constraint using standard epigraph reformulations.

\bibliographystyle{IEEEtran}
\bibliography{references}

\end{document}